\definecolor{dkgreen}{rgb}{0,0.6,0}
\definecolor{gray}{rgb}{0.5,0.5,0.5}
\definecolor{mauve}{rgb}{0.58,0,0.82}
\tiny\color{gray},
\begin{document}

\title{Extensible Data Skipping}
\author{\IEEEauthorblockN{Paula Ta-Shma, Guy Khazma, Gal Lushi, Oshrit Feder}
\IEEEauthorblockA{IBM Research \\
Email: \{paula,oshritf\}@il.ibm.com, \{Guy.Khazma,Gal.Lushi\}@ibm.com}}

\newtheorem{thm}{Theorem}
\newtheorem{definition}[thm]{Definition}
\newtheorem{claim}[thm]{Claim}
\newtheorem{problem}[thm]{Problem}
\newtheorem{remark}[thm]{Remark}
\newtheorem{lemma}[thm]{Lemma}
\newtheorem{assumption}[thm]{Assumption}
\newtheorem{case}[thm]{Case}

\maketitle

\begin{abstract}
Data skipping reduces I/O for SQL queries by skipping over irrelevant data objects (files) based on their metadata. We extend this notion by allowing developers to define their own data skipping metadata types and indexes using a flexible API. 
Our framework is the first to natively support data skipping for arbitrary data types (e.g. geospatial, logs) and queries with User Defined Functions (UDFs). 
We integrated our framework with Apache Spark and it is now deployed across multiple products/services at IBM. We present our extensible data skipping APIs, discuss index design, and implement various metadata indexes, requiring only around 30 lines of additional code per index. 
In particular we implement data skipping for a third party library with geospatial UDFs and demonstrate speedups of two orders of magnitude. Our centralized metadata approach provides a x3.6 speed up even when compared to queries which are rewritten to exploit Parquet min/max metadata. We demonstrate that extensible data skipping is applicable to broad class of applications, where user defined indexes achieve significant speedups and cost savings with very low development cost.  
\end{abstract}

\section{Introduction}
According to today's   best   practices, cloud   compute   and   storage   services should be deployed   and   managed independently. 
This means that potentially huge datasets need to be shipped from the storage service to the compute service to analyse the data.
This is problematic even when they are connected by a fast network, and highly exacerbated when connected across the WAN e.g. in hybrid cloud scenarios. 
To address this, minimizing the  amount  of  data  sent  across  the  network is critical to achieve good performance and low cost.
Data skipping is a technique which achieves this for SQL analytics on structured data.

Data skipping stores summary metadata for each object (or file) in a dataset. For each column in the object, the summary might include minimum and maximum values, a list or bloom filter of the appearing values, or other metadata which succinctly represents the data in that column. This metadata can then be indexed to support efficient retrieval, although since it can be orders of magnitude smaller than the data itself, this step may not be essential. The metadata can be used during query evaluation to skip over objects which have no relevant data. False positives for object relevance are acceptable since the query execution engine will ultimately filter the data at the row level. However false negatives must be avoided to ensure correctness of query results.

Unlike fully inverted database indexes, data skipping indexes are much smaller than the data itself. This property is critical in the cloud, since otherwise a full index scan could increase the amount of data sent across the network instead of reducing it. In the context of database systems, data skipping is used as an additional technique which complements classical indexes.
It is referred to as synopsis in DB2 \cite{raman2013db2} and zone maps in Oracle \cite{ziauddin2017dimensions}, where in both cases it is limited to min/max metadata.
Data skipping and the associated topic of data layout, has been addressed in recent research papers \cite{sun2014fine, shanbhag2017robust} and is also used in cloud analytics platforms \cite{sqlquery, databricks}.
Data skipping metadata is also included in specific data formats \cite{parquet, orc}.

Despite the important role of data skipping, almost all production ready implementations are limited to min/max indexes over numeric or string columns, with the exception of the ORC/Parquet formats which also support bloom filters. Moreover, queries with UDFs cannot be handled. For example, today's implementations do not support data skipping for the query below\footnote{'India' denotes a polygon with India's geospatial coordinates}. 
\begin{verbatim}
SELECT max(temp) FROM weather 
WHERE ST_CONTAINS(India, lat, lon) 
AND city LIKE '%Pur'
\end{verbatim}

We address this by implementing data skipping support for Apache Spark SQL\cite{armbrust2015spark}, and making it extensible in several ways. 
\begin{enumerate}
\itemsep0em 
\item users can define their own data skipping metadata beyond min/max values and bloom filters
\item data skipping can be applied to additional column types beyond numeric and string types e.g. images, arrays, user defined types (UDTs), without changing the source data
\item users can enable data skipping for queries with UDFs by mapping them to conditions over data skipping metadata  
\end{enumerate}
For the query above, our framework allows defining a suffix index for text columns and mapping the {\tt LIKE} predicate to exploit it for skipping, as well as mapping the {\tt ST\_CONTAINS} UDF to min/max metadata on geospatial attributes. This can reduce the amount of data scanned by orders of magnitude.
Our implementation supports plugging in metadata stores, with connectors for Parquet and Elastic Search, and  is integrated into multiple IBM products/services including IBM Cloud\textsuperscript\textregistered SQL Query, IBM Analytics Engine and IBM Cloud Pak\textsuperscript\textregistered  for Data \cite{sqlquery,iae,cp4d}.

We demonstrate various use cases for extensible data skipping, show its benefits far outweigh its costs, and show that centralized metadata storage provides significant performance benefits beyond relying on data (Parquet/ORC) formats only for data skipping. 

This paper is organised as follows. Section \ref{sec:extensible} covers extensible data skipping APIs, section \ref{sec:implementation} discusses our implementation, section \ref{sec:indexdesign} covers metadata index design, section \ref{sec:usecases} discusses experimental results, section \ref{sec:related} covers related work and section \ref{sec:conclusions} presents our conclusions.

\section{Extensible Data Skipping}
\label{sec:extensible}

Our Scala APIs allow the developer to (1) create data skipping indexes, including adding support for new index types, and (2) specify how to exploit data skipping indexes during query evaluation by mapping predicates to operations on summary metadata.
Our framework covers compositions of predicates e.g. using AND, OR and NOT, allowing expressions of arbitrary complexity.

\subsection{Extensible Data Skipping APIs}
For simplicity, we provide a running example for min/max data skipping, but our APIs can handle arbitrary predicates/UDFs and user defined metadata (e.g. {\tt LIKE/ST\_CONTAINS}
 and suffix indexes).
Useful data skipping metadata for the query below is the minimum and maximum temperature for an object (data subset\footnote{Other alternatives for data subsets are blocks, row groups etc. Our integration with Spark skips at the object level.}).
\begin{verbatim}
SELECT * FROM weather WHERE temp > 101
\end{verbatim}

\subsubsection{Index Creation}
Users can define new metadata types which extend our {\tt MetaDataType} class, such as the example below.
\begin{lstlisting}
abstract class MetadataType
case class MinMaxMetaData(col: String, 
    var min: Double, var max: Double) 
    extends MetadataType
\end{lstlisting}

Indexes are created explicitly and executed as a dedicated Spark job.
Index creation runs in 2 phases - see figure \ref{fig:indexcreationflow}. 
\begin{figure}
\centering
\includegraphics[width=3in]{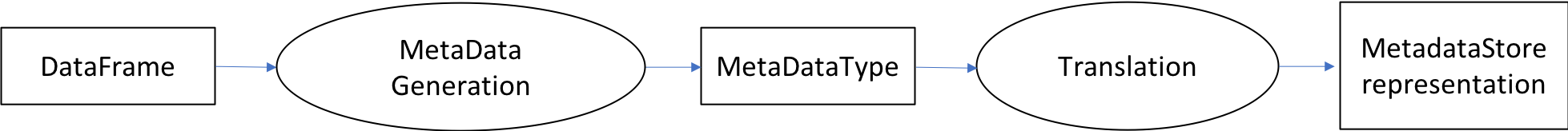}
\caption{Index creation flow}
\label{fig:indexcreationflow}
\end{figure}
The first phase accepts a Spark DataFrame (representing an object) and generates metadata having some {\tt MetaDataType}. The second phase translates this metadata to a metadata store representation.
In order to implement the first phase, the developer extends the {\tt Index} class.
\begin{lstlisting}
abstract class Index(params: Map[String, String], col: String*) {
   def collectMetaData(df: DataFrame): MetadataType
}
\end{lstlisting}
Our example {\tt MinMaxIndex} extends {\tt Index}, and {\tt collectMetadata} returns a {\tt MinMaxMetaData} instance containing minimum and maximum values for the given object column. 

\subsubsection{Query Evaluation}

Spark has an extensible query optimizer called Catalyst\cite{armbrust2015spark}, which contains a library for representing query trees and applying rules to manipulate them. We focus on query predicates i.e. boolean valued expressions typically appearing in a WHERE clause, which can be represented as Expression Trees (ETs). Figure \ref{fig:et} shows the expression tree for our example query.
\begin{figure}
\centering
\includegraphics[width=2in]{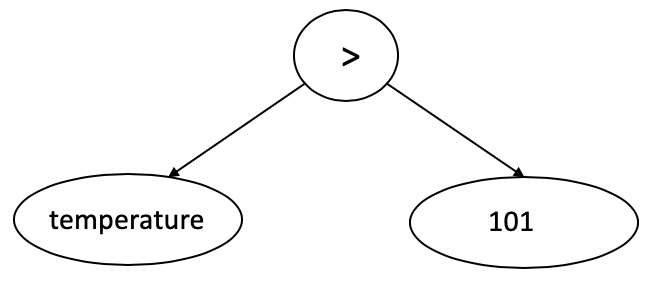}
\caption{An Expression Tree (ET) for the example query}
\label{fig:et}
\end{figure}

We analyse ETs and label tree nodes with {\em Clauses}. A Clause is a boolean condition that can be applied to a data subset $s$, typically by inspecting its metadata.
Note that for a query ET $e$, for every vertex $v$ in $e$, we denote the set of Clauses associated with $v$ by $CS(v)$.
\begin{definition}
Denote the universe of possible data subsets (i.e., $objects$) by $U$.
A Clause $c$ is a boolean function $U\to \{0,1\}$.
\end{definition}

\begin{definition}\label{representdef}
For a Clause $c$ and a (boolean) query expression $e$, we say that $c$ {\bf represents} $e$ (denoted by $c \wr e$), if for every data subset $S$, whenever there exists a row $r \in S$ that satisfies $e$, then $S$ satisfies $c$.
\end{definition}
This means that if $S$ does {\bf not} satisfy $c$, then $S$ can be safely skipped when evaluating the query expression $e$.
For example, let $e$ be
{\tt temp > 101}. Given a data subset $S$, let $c$ be the Clause $\max_{r \in S}{temp(r)} > 101$.
Then $c$ represents $e$.
Therefore, objects where \\ $\max_{r \in S}{temp(r)} <= 101$ can be safely skipped. 

Query evaluation is done in 2 phases as shown in figure \ref{fig:queryevaluationflow}. In the first phase, a query's ET $e$ is labelled using a set of clauses and the clauses are combined to provide a single clause which represents $e$. The labelling process is extensible, allowing for new index types and for new ways of using metadata. In the second phase, this clause is translated to a form that can be applied at the metadata store to filter out the set of objects which can be skipped during query evaluation.
\begin{figure}
\centering
\includegraphics[width=3in]{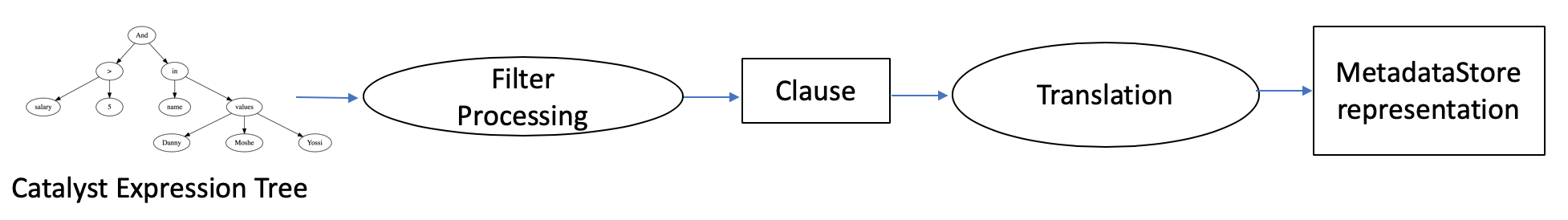}
\caption{Query evaluation flow}
\label{fig:queryevaluationflow}
\end{figure}

The labelling process is done using {\em filters}. Typically there will be one or more filters for each metadata index type. For example, we will define a {\tt MinMaxFilter} to correspond to our {\tt MinMaxIndex}. 
\begin{definition}
An algorithm $A$ is a {\bf filter} if it performs the following action:
When given an expression tree $e$ as input, for every (boolean valued) vertex $v$ in $e$, it adds a set of clauses $C$ s.t. $\forall c \in C$: $c \wr v$ to the existing set of clauses.
\footnote{Note that for a particular node, a filter might not add any clauses (this is the special case of adding the empty set).}
\end{definition}

\begin{figure}
\centering
\includegraphics[width=2.8in]{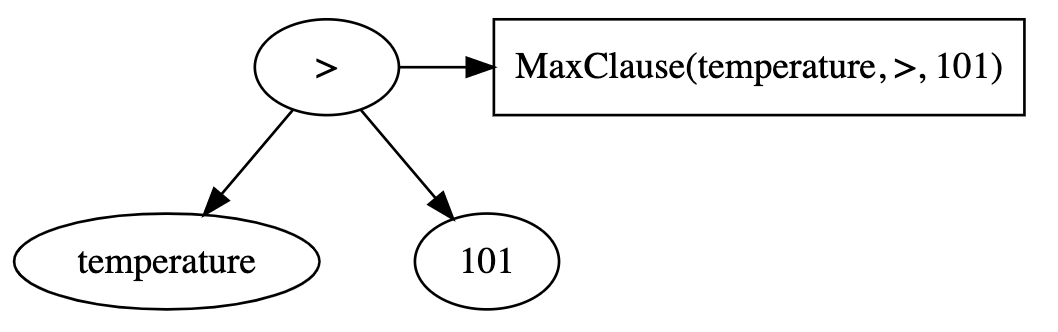}
\caption{The result of a filter on an ET}
\label{fig:filteredET}
\end{figure}

For example a filter $f$ might label our ET using {\tt MaxClause}, as shown in figure \ref{fig:filteredET}, where for a column name $c$ and a value $v$, MaxClause($c$,$>$,$v$) is defined as $\max_{r \in S}{c(r)} > v$. Since MaxClause(temperature,$>$,101) represents the node to which it was applied, $f$ acted as a filter. Since $\max_{r \in S}{c(r)}$ is stored as metadata in {\tt MinMaxMetaData}, MaxClause can be evaluated using this metadata only. 

We provide the user with APIs to define clauses and filters. A {\tt Clause} is a trait which can be extended. A {\tt Filter} needs to define the labelNode method. 
\begin{lstlisting}
case class MaxClause(col:String, op:opType, value:Literal) extends Clause
case class MaxFilter(col:String) extends BaseMetadataFilter {
 def labelNode(node:LabelledExpressionTree): Option[Clause] = {
 			node.expr match {
           case GreaterThan(attr: Attribute, v: Literal) if attr.name == col => Some(MaxClause(col, GT, v))
           case _ => None
    }}
\end{lstlisting}

Filters typically use pattern matching on the ET structure\footnote{For simplicity we left out the cases of $\leq$ and $\geq$ for {\tt MaxFilter} above.}. Similarly we can define a {\tt MinFilter} which can label a tree with {\tt MinClause}s. 
Patterns can also match against UDFs in expression trees e.g. {\tt ST\_CONTAINS} - see also section \ref{sec:geoudfs} for queries using UDFs. 

In some cases a filter's patterns may need to match against complex predicates using AND/OR/NOT. For example, the GeoBox index (section \ref{sec:indexdesign}) 
stores a 2 dimensional bounding box for each object and 
the corresponding filter needs to match against an AND with child constraints on both lat and lng. 
Figure \ref{fig:filterGeobox} illustrates this. 

\begin{figure}
\centering
\includegraphics[width=2.8in]{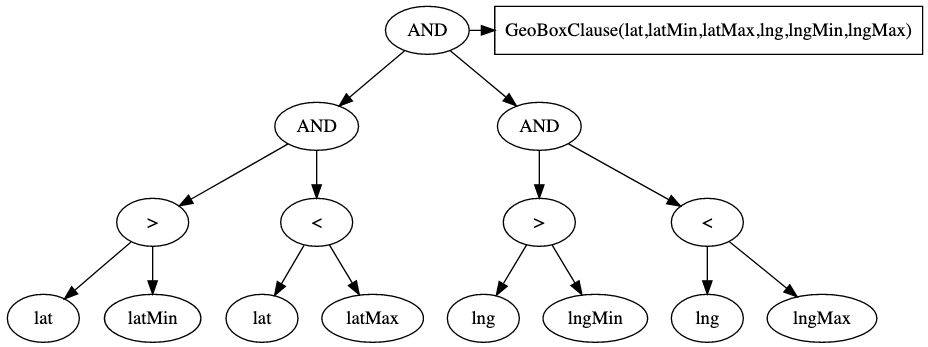}
\caption{The result of a geobox filter on a complex query expression}
\label{fig:filterGeobox}
\end{figure}

Each {\tt MetaDataFilter} needs to be registered in our system, and during query optimization we inspect the types of metadata that were collected and run the relevant filters on the query's ET. Running the complete set of registered filters will generate an ET where each node can be labelled by multiple Clauses.
For every vertex $v$ in $e$, we denote the set of Clauses associated with $v$ by $CS(v)$.
We recursively merge all of an ET's Clauses to form a unified Clause which represents it. 
This Clause is then applied to the metadata to make a final skipping decision. 
For a full formal description of the algorithm used and proof of correctness, see  Appendix~\ref{appendix:correctness}

\section{Implementation}
\label{sec:implementation} 
We implemented data skipping support for Apache Spark SQL\cite{armbrust2015spark}
as an add-on Scala library which can be added to the classpath and used in Spark applications. Our work applies to storage systems which implement the Hadoop FileSystem API, which includes various object storage systems as well as HDFS. We tested our work using IBM Cloud Object Storage (COS) and the Stocator connector \cite{stocator,vernik2018stocator}.
Metadata is stored via a pluggable API which we describe in section \ref{sec:metadatastore}.
The library supports multiple levels of extensibility: code which implements any of our extensible APIs such as metadata types, and clause and filter definitions, as well as additional metadata stores, can be added as plugin libraries.

\subsection{Spark Integration}
Spark uses a partition pruning technique to filter the list of objects to be read if the dataset is appropriately partitioned.
Our approach further prunes this list according to data skipping metadata  
as shown in figure~\ref{fig:ourflow}.

Our technique applies to all Spark supported native formats e.g. JSON, CSV, Avro, Parquet, ORC, and can benefit from optimizations built into those formats in Spark.
Unlike approaches which embed data skipping metadata inside the data format which require reading footers of every object, our approach avoids touching irrelevant objects altogether.
It also avoids wasteful resource allocation because when relying on a format's data skipping, Spark allocates resources to handle entire objects, even when only object footers need to be processed. 
\begin{figure}
\centering
\includegraphics[width=3in]{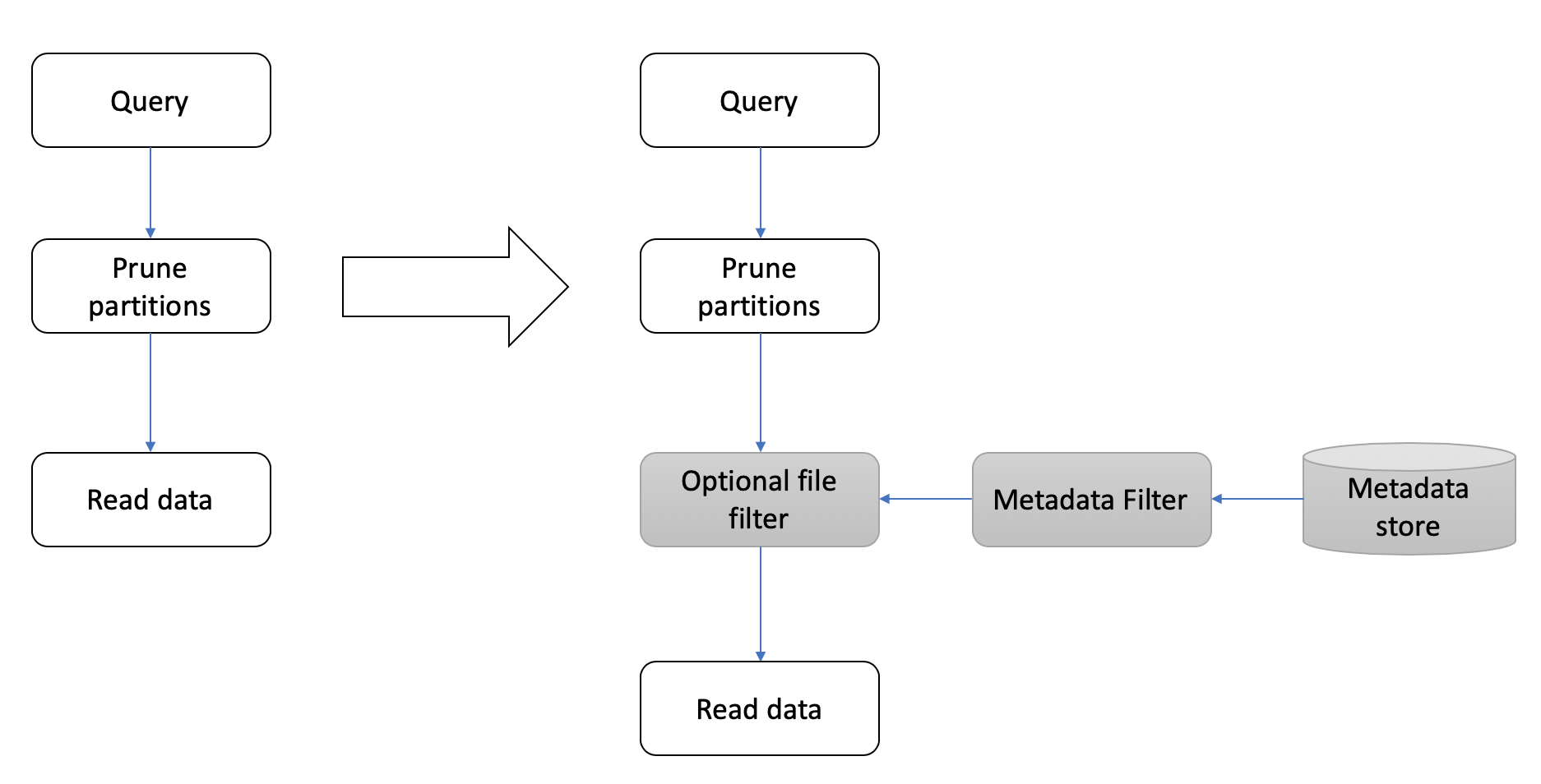}
\caption{Modified Spark query execution flow after integration with extensible data skipping}
\label{fig:ourflow}
\end{figure}
We provide an API for users to retrieve how much data was skipped for each query. 

We used APIs provided by Spark's Catalyst query optimizer to achieve this without changing core Spark. In particular, we added a new optimization rule using the Spark session extensions API\cite{sunithablog}. Spark SQL maintains an \\ {\tt InMemoryFileIndex} which tracks the objects to be read for the current query and their properties. Our rule wraps the {\tt InMemoryFileIndex} with a new class extending it by adding the additional filtering step from figure~\ref{fig:ourflow}. 

We refrain from skipping objects when our metadata about them is stale. This can happen if objects are added, deleted or overwritten in a dataset after indexing it. We keep track of freshness using last modified timestamps, which are retrieved during file listing by the {\tt InMemoryFileIndex}.  We also provide a refresh operation, which updates stale metadata.

\subsection{Metadata Stores}
\label{sec:metadatastore}
We support a pluggable API for metadata stores including the specification of how metadata and clauses should be translated for a particular store. 
This includes the indexing time translation API for figure \ref{fig:indexcreationflow} and the query time translation API for figure \ref{fig:queryevaluationflow}.
The key property is that these translations should preserve the correctness of our skipping algorithm. 
We used this API to implement both Parquet and Elastic Search\cite{elasticsearch} metadata stores.

It is now widely accepted practice to use the same storage system for both data and metadata\cite{delta,iceberg,hudi}, avoiding deployment of an additional metadata service.
This is achieved using our Parquet metadata store, and all storage systems implementing the Hadoop FS API are supported.
Relevant metadata indexes are scanned prior to query execution, but this cost is not significant, since metadata is typically considerably smaller than data. 
By leveraging Parquet's column-wise compression and projection pushdown for metadata, we minimize the amount of metadata that needs to be read per query, ensuring low overhead. 
Use of Parquet also allows generating and storing metadata for multiple columns together, resulting in better indexing and refresh performance, compared with storing indexes on each column separately.

\subsection{Protecting Sensitive Data and Metadata}
Security and privacy protection for sensitive data are essential for today's cloud services.
Parquet supports column-wise encryption of sensitive columns in a modular and efficient fashion\cite{parquetspec,parqencperf}, and being format-agnostic, our library supports skipping over encrypted parquet data transparently.
However, an end-to-end solution needs to encrypt metadata, since it can also leak sensitive information. 
To prevent leakage, when storing metadata in Parquet, we implemented an option to encrypt indexes on sensitive columns, by assigning a key to each index. 
A user can choose the same key used to encrypt the column the index originates from, choose another key, or leave the index as plaintext.
This scheme enables scenarios such as storing data and metadata at a shared location, where each user can only access a subset of the columns and indexes according to their keys. 

Spark's partition pruning capability relies on either (1) a widely accepted naming convention which names appropriately partitioned data objects according to their partitioning column name and column value or (2) a Hive metastore. The first option leaks metadata into object names, and therefore partitioning according to sensitive columns is problematic. 
Use of a Hive metastore in a multi-tenant cloud service pushes the problem of managing sensitive multi-tenant metadata to the underlying database. 
An alternative is to rely on our data skipping framework for partition pruning, thereby ensuring end-to-end data and metadata protection, without sacrificing performance.

\section{Metadata Index Design}
\label{sec:indexdesign}
In this section we explain the requirements of a good metadata index type and cover indicators of skipping effectiveness. 
We show that in theory both selecting and designing optimal indexes are hard problems. 
However, we demonstrate practical choices that work well in this and the following section.  
We survey various index types implemented using our APIs with a summary in table~\ref{table:indextypes}.  

Our goal is to minimize the total number of bytes scanned, because there is a close correlation between this and query completion time (e.g. see section~\ref{sec:usecases}). Moreover, users of serverless SQL services are typically billed in proportion to the number of bytes scanned\cite{athena,sqlquerypricing}.

For each query, prior to reading the data, the relevant metadata is scanned and analyzed. As long as the metadata is much smaller than the data, this approach can significantly reduce the amount of data scanned overall. 
For big datasets the overhead of scanning metadata is usually insignificant compared to the benefits of skipping data (see figure~\ref{fig:breakdown}), and in some cases metadata can also be cached in memory or on SSDs. When using our Parquet metadata store, we read only the relevant metadata indexes by using Spark and Parquet column projection capabilities. 

\subsection{Indicators of Skipping Effectiveness}\label{skippingindicators}
Given a dataset (set of rows) $D$ and a query $Q$, a row $r$ in $D$ is relevant to $Q$ if $r$ must be read in order to compute $Q$ on $D$. Let $D_r$ denote the set of relevant rows in $D$. Assuming $D$ is stored as objects\footnote{alternatively other units can be considered such as blocks, row groups etc.}, let $O$ denote the set of all  objects for $D$, let $O_r$ denote the set of objects relevant to $Q$ (i.e. having at least one relevant row), and let $O_m$ be the set of objects deemed relevant according to the metadata associated with $D$. Note that $O_r \subseteq O_m$. Note that $O_s = O - O_m$ is the set of objects that can be skipped. 

Denote the number of rows in object $o$ (or dataset $D$) as $|o|$ ($|D|$). All definitions below are w.r.t. a dataset $D$ and a query $Q$.
\begin{definition}
The {\bf selectivity} $\sigma$ of a query is the proportion of relevant rows $\sigma = \frac{|D_r|}{|D|}$  
\end{definition}
Data skipping can potentially reduce bytes scanned for selective\footnote{Selectivity ranges between 0 and 1.``Highly selective" queries have close to 0 selectivity} queries. 
The definitions use relevant rows rather than rows in the result set to account for queries which perform further computations such as aggregation. 
\begin{definition}
The {\bf layout factor} $\lambda$ of a query is the proportion of relevant rows in relevant objects\\
$\lambda = \frac{|D_r|}{\sum_{o \in O_r}|o|}$
\end{definition}
Mixing relevant and irrelevant rows in the same object decreases the layout factor. A high layout factor (grouping relevant rows together) increases the potential for data skipping. To realise this potential we need effective metadata.
\begin{definition}
The {\bf metadata factor} $\mu$ of a query is \\ 
$\mu = \frac{\sum_{o \in O_r}|o|}{\sum_{o \in O_m}|o|}$
\end{definition}
The metadata factor is closely related to the metadata's false positive ratio - a low false positive ratio gives rise to a high metadata factor. In addition the metadata factor takes into account the relative size of each object. A high metadata factor denotes that the metadata is close to optimal given the data layout. 
\begin{definition}
The {\bf scanning factor} $\psi$ of a query is the proportion of rows actually scanned (using metadata) \\
$\psi = \frac{\sum_{o \in O_m}|o|}{|D|}$
\end{definition}
Our aim is to achieve the lowest possible scanning factor. According to our definitions
\begin{equation}
\label{eqn:indicators}
\psi = \frac{\sigma}{\lambda\mu}
\end{equation}
To achieve this for a selective query we need $\lambda\mu$ to be high, and we are equally dependent on good layout and effective metadata\footnote{Note that the scanning factor is not defined for queries with 0 selectivity}.

We focus here on metadata effectiveness for any given data layout, and refer the reader to previous work regarding data layout optimization\cite{sun2014fine, shanbhag2017robust}. 
In practice, often data layout is given and cannot be changed e.g. legacy requirements, compliance, encryption of one or more sensitive columns. In other cases, re-layout of the data is too costly, or it might be difficult to meet the needs of multiple conflicting workloads without duplicating the entire dataset.

Our approach is to enable an extensible range of metadata types, which cater to data within a reasonable range of layout factors. 
Generating data skipping metadata is typically significantly cheaper than changing the data layout, since no shuffling of the data is needed. Moreover, unlike data layout, it can be done without write access to the dataset and only requires read access to the column(s) at hand. Each user can potentially store metadata corresponding to their particular workload.

On the other hand, using equation \ref{eqn:indicators}, we can identify cases where the layout factor is prohibitively low and good skipping is unachievable without re-layout.

To take averages of skipping indicators when considering multiple queries, we use the geometric mean, following\cite{sel-paper}. Let $\gm[X]$ denote $(\prod_{i=1}^n x_i)^{\frac{1}{n}}$. 
Given a dataset $D$ and a workload with queries $q_1,\ldots,q_n$, where for each $q_i$ we have $\psi_i = \frac{\sigma_i}{\lambda_i \mu_i}$, then we also have
\begin{equation}\label{eqn:factorsequation}
\gm[\psi] = \frac{\gm[\sigma]}{\gm[\lambda] \gm[\mu]}
\end{equation}
We apply this approach to measuring the skipping indicators on real world datasets and workloads in section \ref{sec:usecases}.

\subsection{The Index Selection Optimization Problem}
Given a dataset $D$ and query workload (set of queries) $Q$, it is natural to ask what is the optimal set of metadata indexes we can store to achieve the lowest possible scanning factor. Since the workload and data layout are given, $\sigma$ and $\lambda$ are given, and to achieve low $\psi$ we need to achieve high $\mu$. We assume that every metadata index $i$ has a cost $c_i$, and that we need to stay within a given metadata budget $K$. A natural cost definition is the size of the metadata in object storage. 
We also assume that each index $i \in I$ provides a benefit $v_i$ which in our case corresponds to the increase in $\mu$ as a result of $i$. 
Ideally, given $K$ and a set of candidate metadata indexes $I$, one could choose an optimal subset $I' \subseteq I$ which gives maximal $\mu$ while staying within budget. 
We show that this problem is NP-hard using a reduction from the knapsack problem. Previous work showed that the problem of finding a data layout providing optimal skipping is also NP-hard\cite{sun2014fine}.

\begin{problem}
\label{metadata-optimization-problem}
Given dataset $D$, workload $Q$, a set of indexes $I$, and a metadata budget $K$, find $I' \subseteq I$ that maximizes $\sum_{i \in I'}v_i$ subject to $\sum_{i \in I'}c_i \leq K$.
\end{problem}

\begin{claim}
Problem \ref{metadata-optimization-problem} is NP-hard.
\end{claim}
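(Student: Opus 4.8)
The plan is to show NP-hardness by exhibiting a polynomial-time reduction from the classical 0/1 Knapsack decision problem, which is well known to be NP-complete. An instance of Knapsack consists of $n$ items, each with a weight $w_j$ and a value $p_j$, a weight bound $W$, and a target value $P$; the question is whether there is a subset $J \subseteq \{1,\ldots,n\}$ with $\sum_{j\in J} w_j \le W$ and $\sum_{j\in J} p_j \ge P$. Given such an instance, I would construct a dataset $D$, a workload $Q$, a candidate index set $I = \{i_1,\ldots,i_n\}$, and a budget $K$ so that index $i_j$ has cost $c_{i_j} = w_j$, benefit $v_{i_j} = p_j$, and $K = W$. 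Then Problem~\ref{metadata-optimization-problem} asks exactly whether some $I'$ achieves $\sum_{i\in I'} v_i \ge P$ within budget $W$ --- i.e., precisely the Knapsack decision question --- so a polynomial-time algorithm for the former would solve the latter.

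The substantive part is arguing that such a $(D, Q, I)$ actually exists, since $c_i$ and $v_i$ are not free parameters but are induced by a concrete dataset, workload, and the resulting metadata factor $\mu$. For the costs this is easy: by padding, one can make the serialized metadata for index $i_j$ occupy exactly (or proportionally to) $w_j$ bytes. For the benefits, I would engineer the dataset so that the indexes act on disjoint, independent portions of the data and their contributions to $\mu$ are additive. Concretely, partition the objects of $D$ into $n$ disjoint groups $G_1,\ldots,G_n$ of carefully chosen sizes, and design the workload so that for group $G_j$, including index $i_j$ removes a block of false positives whose total row count is proportional to $p_j$, while including or excluding any other index has no effect on $G_j$. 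Because the groups are independent, $\sum_{o\in O_r}|o|$ is fixed and $\sum_{o\in O_m}|o|$ decreases by exactly $\sum_{j \in I'} p_j$ (up to a common normalization), making $\mu$ --- and hence the objective $\sum_{i\in I'} v_i$, defined as the increase in $\mu$ --- a monotone affine function of $\sum_{j\in I'} p_j$. Thus maximizing the objective subject to the budget is order-isomorphic to the Knapsack optimization.

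I would then verify the reduction is polynomial: the constructed dataset, workload and index set have size polynomial in the (unary or binary, matching the standard Knapsack formulation) encoding of the Knapsack instance, and the translation of answers is immediate. Putting the pieces together: an optimal solution to Problem~\ref{metadata-optimization-problem} on the constructed instance yields, by the affine correspondence, an optimal Knapsack solution, and conversely; hence Problem~\ref{metadata-optimization-problem} is at least as hard as Knapsack, i.e. NP-hard.

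The main obstacle I anticipate is the second step --- making the benefit $v_i$, which is semantically tied to the metadata factor $\mu$ of a real dataset and workload, behave as an arbitrary prescribed additive weight. One must ensure the per-index false-positive savings are genuinely independent (no double counting, no interaction where two indexes together skip more or less than the sum of their individual savings) and that every target profile $(p_1,\ldots,p_n)$ is realizable; the disjoint-groups construction is the natural way to guarantee this, but the details of specifying objects, row contents, predicates and which metadata types are ``available'' for each group need to be pinned down carefully so that the intended index is the unique one that helps its group. Everything else --- the knapsack-to-problem dictionary, cost padding, polynomiality --- is routine once this independence is in place.
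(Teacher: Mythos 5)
Your proposal is correct and takes essentially the same route as the paper: a reduction from $\{0,1\}$-knapsack in which item weights, values, and capacity map to index costs $c_i$, benefits $v_i$, and the budget $K$. The extra work you anticipate in your second paragraph --- engineering a concrete $(D,Q,I)$ with disjoint groups so that the benefits are additive and realizable --- is not needed under the paper's formulation, since Problem~\ref{metadata-optimization-problem} treats the $c_i$ and $v_i$ as given parameters of the instance (and the paper's accompanying remark explicitly notes that the additive-benefit setting is the special case being shown hard); your construction is nonetheless a reasonable way to justify that this special case is realizable by actual datasets and workloads, which the paper leaves implicit.
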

\begin{proof}
By reduction from \{0,1\}-knapsack. Knapsack item weight and value correspond to the cost and benefit of an index respectively, and knapsack capacity corresponds to the metadata budget. Clearly, maximizing the value of items in the knapsack within capacity is equivalent to maximizing index benefit within a metadata budget.
\end{proof}
\begin{remark}
This formulation shows that even in the special case where the benefit of an index is independent from other indexes, the problem is hard. In the general case, the benefit of indexes is relative since, for example, an index which achieves maximal $\mu$ renders further addition of indexes obsolete. 
\end{remark}

Given a fixed metadata budget, choosing optimal indexes is a hard problem.  
However, for many index types\footnote{all index types in table~\ref{table:indextypes} except for value list and prefix/suffix indexes} we store a fixed \#bytes per object, thereby bounding the index size to a small fraction of the data size. Using such index types, it is reasonable to index all data columns, assuming the metadata is stored in the same storage system as the data (i.e. with the same storage/access cost).

\subsection{An Index Design Optimization Problem}
Choosing an optimal set of indexes is hard. What about designing a single optimal index?
We show that this is hard even for a range query workload on a single column.  
Consider a single column $c$ with a linear order e.g. integers, and a workload $Q$ with {\bf range queries} over $c$ i.e. queries of the form 
\begin{verbatim}
SELECT * FROM D 
WHERE c between c1 and c2
\end{verbatim}
Storing min/max metadata only for $c$ may not achieve maximal $\mu$, for example, when an object's rows have gaps in column $c$ between the min and max values. In this case if $c1$ and $c2$ both fit inside the gap then min/max metadata will give a false positive for the query above. 
A gap list metadata index could store a list of such gaps per object, and be used to skip objects having gaps covering the intervals used in queries. Given a dataset $D$, a workload $Q$ and metadata budget of $k$ gaps, which gaps should be stored to give optimal $\mu$? (We assume the cost of each gap is equal). An algorithm which achieves this is provided in \cite{eagleEyed}.
However, we show that allowing queries with disjunction turns this into a hard problem. 
\begin{problem}
\label{index-optimization-problem}
Given a dataset $D$ with a column $c$ having a linear order, a workload $Q$ comprising of disjunctive range queries over $c$, and a metadata budget of $K$ gaps, find a set of $k$ gaps where $k \leq K$ such that $\mu$ is maximized.
\end{problem}

\begin{problem}
(Densest k-Subhypergraph problem) Given a hypergraph $G= (V, E)$ and a parameter $k$, find a set of $k$ vertices with maximum number of hyperedges in the subgraph induced by this set\cite{ChlamtacDKKR16}.
\end{problem}
\begin{claim}
Problem \ref{index-optimization-problem} is NP-hard.
\end{claim}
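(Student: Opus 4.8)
The plan is to reduce from the Densest $k$-Subhypergraph problem stated above. Given a hypergraph $H=(V,E)$ with $V=\{v_1,\dots,v_n\}$ and a parameter $k$ (we may assume $n\ge k$, padding $V$ with isolated vertices otherwise), I build in polynomial time an instance of Problem~\ref{index-optimization-problem} over a single ordered column $c$ whose dataset $D$ has exactly two objects. The object $o$ holds the rows with $c$-values $2,4,6,\dots,2n+2$, so its bounded gaps are precisely the pairwise-disjoint intervals $g_i=(2i,2i+2)$ for $i=1,\dots,n$; I identify the vertex $v_i$ with the candidate gap $g_i$. The object $o^{\ast}$ holds the rows with $c$-values $3,5,\dots,2n+1$. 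The workload contains one disjunctive range query $Q_e$ per hyperedge $e\in E$, whose disjuncts are the narrow ranges $[\,2i+\tfrac{3}{4},\ 2i+\tfrac{5}{4}\,]$ for the vertices $v_i\in e$ (each such range lies strictly inside $g_i$ and inside no other gap of $o$). Finally I set the gap budget $K=k$.

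First I identify the relevant objects. Since no $c$-value of $o$ lies in any query range, $o$ is \emph{irrelevant} to every $Q_e$, so $o\notin O_r$ and $o$ is a candidate for skipping; since $o^{\ast}$ contains the value $2i+1$ for every $i$, it is \emph{relevant} to every $Q_e$, so $o^{\ast}\in O_r\subseteq O_m$ always, i.e.\ it is never skipped. Second I tie skipping $o$ to induced hyperedges: because $g_1,\dots,g_n$ are pairwise disjoint, each query range around $2i+1$ is contained in exactly one gap of $o$, namely $g_i$, so $o$ can be skipped for $Q_e$ if and only if every gap $g_i$ with $v_i\in e$ is among the stored gaps. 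Storing any gap other than $g_1,\dots,g_n$ (an unbounded gap of $o$, or any gap of $o^{\ast}$) never enables skipping any $Q_e$, hence an optimal solution may be assumed to store exactly $k$ of the $g_i$; writing $S\subseteq V$ for the corresponding vertex set, $o$ is skipped for $Q_e$ precisely when $e\subseteq S$.

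It remains to read off the objective. For each $Q_e$ we have $O_r=\{o^{\ast}\}$, while $O_m=\{o^{\ast}\}$ when $e\subseteq S$ and $O_m=\{o^{\ast},o\}$ otherwise; hence the metadata factor of $Q_e$ is $1$ when $e\subseteq S$ and is the fixed constant $\rho:=\lvert o^{\ast}\rvert/(\lvert o^{\ast}\rvert+\lvert o\rvert)\in(0,1)$ otherwise. Aggregating over the workload by taking the geometric mean, as in \eqref{eqn:factorsequation}, the objective value is $G(\mu)=\bigl(\prod_{e\in E}\mu_e\bigr)^{1/|E|}=\rho^{\,(|E|-t)/|E|}$, where $t=\lvert\{e\in E:\ e\subseteq S\}\rvert$ is the number of hyperedges induced by $S$. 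Since $\rho$ is a constant strictly between $0$ and $1$, $G(\mu)$ is strictly increasing in $t$, so an optimal solution to Problem~\ref{index-optimization-problem} on this instance stores $k$ of the gaps $g_i$ whose corresponding vertex set $S$ maximizes $t$ --- that is, a densest $k$-subhypergraph of $H$ --- and such a set is recovered from the gap set in polynomial time. (The same reduction works verbatim for the sum, or any strictly increasing aggregate of the $\mu_e$.) Hence Problem~\ref{index-optimization-problem} is NP-hard.

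The crux is the construction in the first two paragraphs: the ``victim'' object $o$ must be designed so that each query range is covered by one and only one candidate gap, which is exactly what forces the event ``$o$ is skipped for $Q_e$'' to coincide with ``$e$ is induced by the chosen gaps''. The auxiliary object $o^{\ast}$ is small but indispensable, since without a genuinely relevant object every $Q_e$ would have selectivity $0$ and $\mu$ would be undefined; and the monotonicity observation --- spurious stored gaps never help, so the budget is met with equality by useful gaps --- should be stated explicitly even though it is routine.
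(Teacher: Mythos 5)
Your proof is correct and follows essentially the same reduction as the paper: Densest $k$-Subhypergraph, with vertices mapped to pairwise-disjoint gaps of a single "victim" object and each hyperedge mapped to a disjunctive range query whose disjuncts sit inside exactly the gaps of its vertices, so that skipping the object for $Q_e$ coincides with $e$ being induced by the chosen gap set. The one genuine difference is your auxiliary always-relevant object $o^{\ast}$: the paper uses a single object that is irrelevant to every query, under which the formal definition $\mu=\sum_{o\in O_r}|o|/\sum_{o\in O_m}|o|$ degenerates to $0$ or $0/0$, and the paper silently reinterprets $\mu$ as "the number of queries where $O$ is skipped"; your second object makes each per-query $\mu_e$ well-defined ($1$ versus a fixed $\rho\in(0,1)$) and the geometric-mean aggregate strictly monotone in the number of induced hyperedges, so your version is a more careful rendering of the same argument rather than a different one.
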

\begin{proof}
By reduction from the densest k-Subhypergraph problem. Given  $G= (V, E)$ and $k$, we construct an input to problem \ref{index-optimization-problem} as follows. We create a dataset with one object $O$ such that its column $c$ induces $|V|$ gaps - $\{g_1, g_2, .., g_{|V|}\}$, and use the function $f(v_i) = g_i$ to map each vertex to a gap. 
Each hyperedge $e\in E$ is mapped to a query with a WHERE clause comprised of a predicate of the form:
$\vee_{v\in e}{c \in {f(v)}}$. In order to skip $O$ for this query we need exactly those gaps in $\{ f(v) | v \in e \}$.
In this setting maximizing $\mu$ (the number of queries where $O$ is skipped) is equivalent to finding the densest k-Subhypergraph.
\end{proof}

\subsection{Metadata Index Types}

Table~\ref{table:indextypes} contains a summary of common index types (MinMax, BloomFilter) as well as novel ones we found useful for our use cases and implemented for our Parquet metadata store. 
All metadata enjoys Parquet columnar compression and efficient encoding - therefore the Bytes/Object values in the table can be considered an upper bound.

\begin{table*}
\centering
\begin{threeparttable}
\caption{Data Skipping Index Types}
\begin{tabular}{ | c | l | c | c | c | }\hline
{\bf Index Type}& {\bf Description} & {\bf Column Types} &{\bf Handles Predicates}\tnote{1} & {\bf Bytes/Object}\tnote{2}\\ \hline
MinMax &  Stores minimum/maximum values for a column   & ordered & $p(n,c)$ & $2b$ \\ \hline
GapList & Stores a set of $k$ gaps indicating ranges where there are & ordered & $p(n,c)$ & $kb$ \\ 
& no data points in an object & & & \\ \hline
GeoBox & Applies to geospatial column types e.g. Polygon, Point. & geospatial & geo UDFs& $2xb$ \\ 
&  Stores a set of $x$ bounding boxes covering data points& & & \\ \hline
BloomFilter & Bloom filter is a well known technique\cite{bloom} & hashable & $n=c, n \in C$ & $\frac{-v\ln{f}}{\ln^2{2}}$ {\tiny(in bits)} \\ \hline
ValueList & Stores the list of unique values for the column & has =, text & $n=c, n \in C$, LIKE & $vb$ \\ \hline
Prefix & Stores a list of the unique prefixes having $b_1$ characters & text & LIKE 'pattern\%'  & $v_{1}b_1$ \\ \hline
Suffix & Stores a list of the unique suffixes having $b_2$ characters & text & LIKE '\%pattern'  & $v_{2}b_2$ \\ \hline
Formatted & Handles formatted strings. There are many uses cases. & text & template based UDFs & varies \\ \hline
MetricDist & Stores an origin, max and min distance per object  & has metric dist& metric distance UDFs & $2m + b$  \\ \hline
\end{tabular}
\label{table:indextypes}
\begin{tablenotes}
\item [1]$p \in \{<,\leq, >, \geq,=\}$. $n$ is a column name and $c$ is a literal, $C$ is a set of literals. 
\item [2]$b$ is the (average) number of bytes needed to store a single column element. $v$ is the number of distinct values in a column for the given object. $k$ is the number of gaps (configurable). $x$ is the number of boxes per object. $v_1$ ($v_2$) is the number of distinct values with prefix (suffix) of size $b_1$ ($b_2$). $m$ is the number of bytes needed to store a distance value. $f$ is the false positive rate ($f \in(0,1)$).
\end{tablenotes}
\end{threeparttable}
\end{table*}

The {\bf MetricDist} index enables similarity search queries using UDFs based on any metric distance e.g. Euclidean, Manhattan, Levenshtein.  Applications include document and genetic similarity queries. Recently semantic similarity queries have been applied to databases\cite{cognitivedb}, where values are considered similar based on their context, 
allowing queries such as ``which employee is most similar to Mary?". Assuming a metric function for similarity, extensible data skipping can be successfully applied.

Additional index types can be easily integrated by implementing our APIs - example candidates include SuRF\cite{surf}, HOT\cite{hot}, HTM\cite{htm}. 
Recent work demonstrated the use of range sets (similar to our gap lists) to optimize queries with JOINs\cite{dips}. 
Adding a new index type via our APIs requires roughly 30 lines of new code.

\subsection{A Hybrid Index}
When a column typically has low cardinality per object, a value list is both more space efficient than a bloom filter and avoids false positives. However, for high cardinality, value list metadata size can approach that of the data. In order to achieve the best of both worlds, we implemented a hybrid index, which uses a value list up to a certain cardinality threshold, and a bloom filter otherwise. 
We now explain how we determined an appropriate threshold. 

Assuming equality predicates only, we compare value list and bloom filter indexes using the formulas presented in table \ref{table:indextypes}.
Our aim is to minimize the total bytes scanned for data and metadata.
Given an object of size $|o|$, a column with $v$ distinct values each one of size $\bar{b}$ bits, and a workload $Q = {\{q_i\}}_{i=1}^{n}$ of exact match queries, let $E_i \in \{0,1\}$ be the event in which $o$ must be read for $q_i$. It follows that the average data to be scanned for the workload using a bloom filter index is approximately $ \frac{1}{n} \sum_{i=1}^{n} (\frac{-v\ln{f}}{\ln^2{2}} + E_i|o| + (1-E_i)f|o|)$. The average data to be scanned for the workload using value list is exactly $ \frac{1}{n} \sum_{i=1}^{n}(v\bar{b} + E_i|o|)$.
Therefore, a value list index is preferable when:
\[v(\bar{b} + \frac{\ln{f}}{\ln^2{2}}) < f|o|(1 - \frac{1}{n} \sum_{i=1}^{n}E_i) \]
The term $\frac{1}{n} \sum_{i=1}^{n}E_i $ can be approximated using the expected scanning factor when using a value list index, which can be derived from the workload mean layout and selectivity factors using equation \ref{eqn:factorsequation}.

For example, given an object of size 64MB with a string column of up to 64 characters ($\bar{b}=512$) and a target scanning factor of 0.01, a value list up to 10,088 elements is preferable over a bloom filter with $f = 0.01$.
We implemented a {\bf hybrid index} which creates a bloom filter or value list per object according to the column cardinality. 
By default we use a threshold of 10K elements based on the above example, but this threshold can be changed according to dataset properties.

\section{Experimental Results}
\label{sec:usecases}
We focus on use cases where data is born in the cloud at a high, often accelerating, rate so highly scalable and low cost solutions are critical.  
We demonstrate our library for geospatial analytics (representing IoT workloads in general) and log analytics on 3 proprietary datasets. 
We collect skipping effectiveness indicators and discuss their effect on the scanning factor (hence data scanned). All experiments were conducted using Spark 2.3.2 on a 3 node IBM Analytics Engine cluster, each with 128GB of RAM, 32 vCPU, except where mentioned otherwise.
The datasets are stored in IBM COS. All experiments are run with cold caches.

A proprietary (1) {\bf Weather Dataset} contains a 4K grid of hourly weather measurements. The data consists of a single table with 33 columns such as latitude, longitude, temperature and wind speed. The data was geospatially partitioned using a KD-Tree partitioner\cite{shanbhag2017robust}.
One month of weather data was stored in 8192 Parquet objects using snappy compression with a total size of 191GB.

The two proprietary http server log datasets below are samples of much larger datasets and use Parquet with snappy compression:

 (2) A \textbf{Cloud Database Logs} dataset, consisting of a single table with 62 columns such as db\_name, account\_name, http\_request. The data was partitioned daily with layout according to the account\_ name for each day, resulting in 4K objects with a total size of 682GB.
 
(3) A {\bf Cloud Storage Logs} dataset, consisting of a single table with 99 columns such as container\_name, account\_name, user\_agent. The data was partitioned hourly, resulting in 46K objects with a total size of 2.47TB.

\subsection{Indexing}
Use of our APIs allows adding new index types achieving similar performance to native index types with little programmer effort. 
Table \ref{table:indexstats} in appendix~\ref{sec:indexstats} reports statistics for indexing a single column using various index types on our datasets. 
In addition, we implemented an optimization which reads min/max statistics from Parquet footers, which gives significant speedups when only MinMax indexes are used on Parquet data\footnote{If additional index types are used it provides no benefit since the Parquet row groups need to be accessed in any case.}.

Figure~\ref{fig:indexingcols} shows that indexing multiple columns using the Hybrid index is significantly faster than indexing each column separately\footnote{other index types behave similarly}, even for Parquet data where columns are scanned individually. 
For MinMax the indexing time remains low (benefiting from our MinMax optimization) and flat when varying the number of columns.

\begin{figure}
    \subfloat{\includegraphics[width=0.24\textwidth]{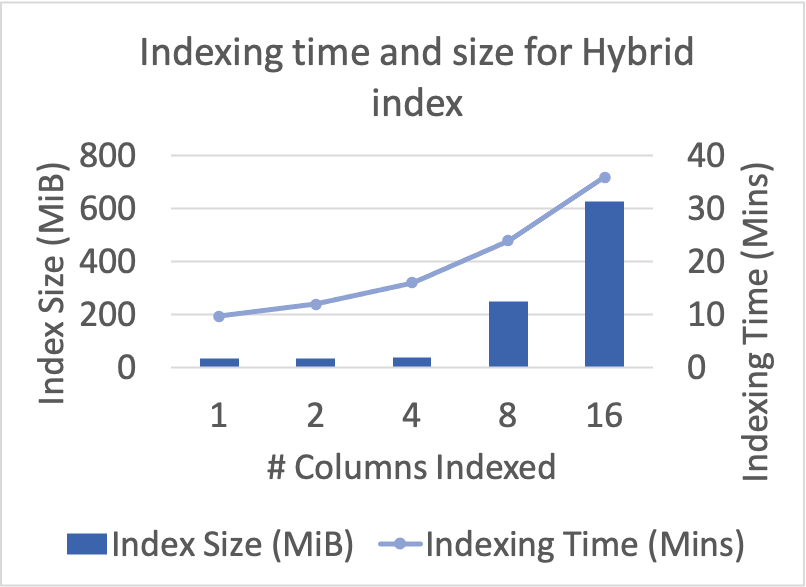}}
    \subfloat{\includegraphics[width=0.24\textwidth]{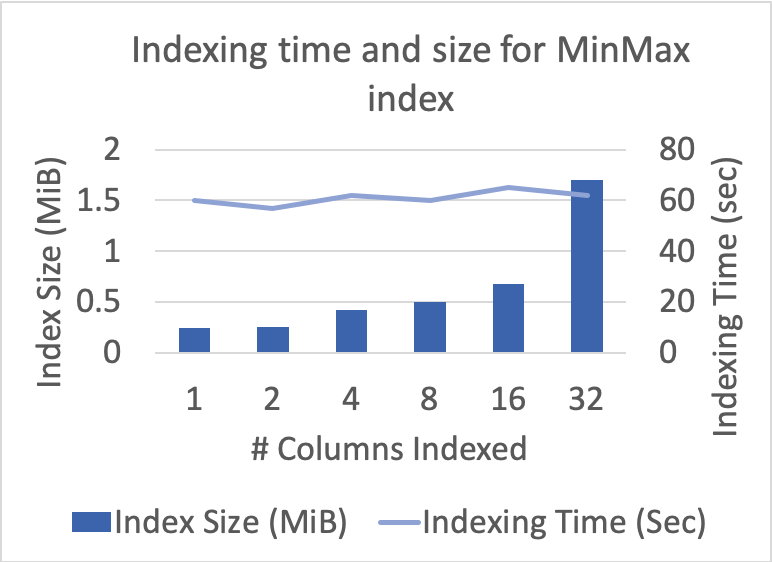}}
    \caption{Indexing time/size vs \#columns (log scale) with Hybrid (cloud database logs) and MinMax (weather)}
    \label{fig:indexingcols}
\end{figure}

We note that indexing can be done per object at data generation or ingestion time, and can alternatively be done using highly scalable serverless cloud frameworks e.g. \cite{pywren}.

\subsection{Metadata versus Data Processing}

\begin{figure*}
    \subfloat[]{\includegraphics[width=0.48\textwidth]{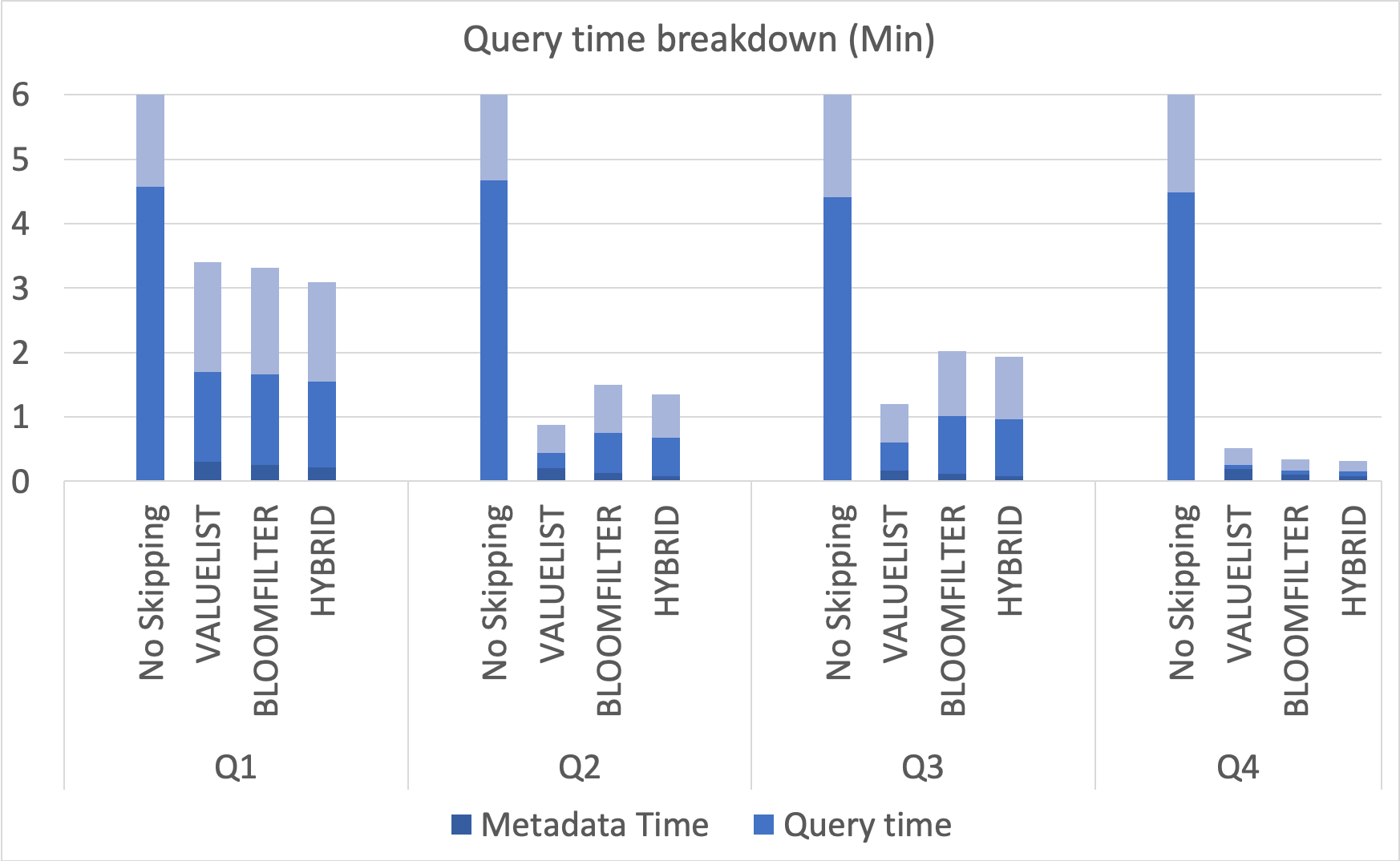}} 
    \quad
    \subfloat[]{\includegraphics[width=0.48\textwidth]{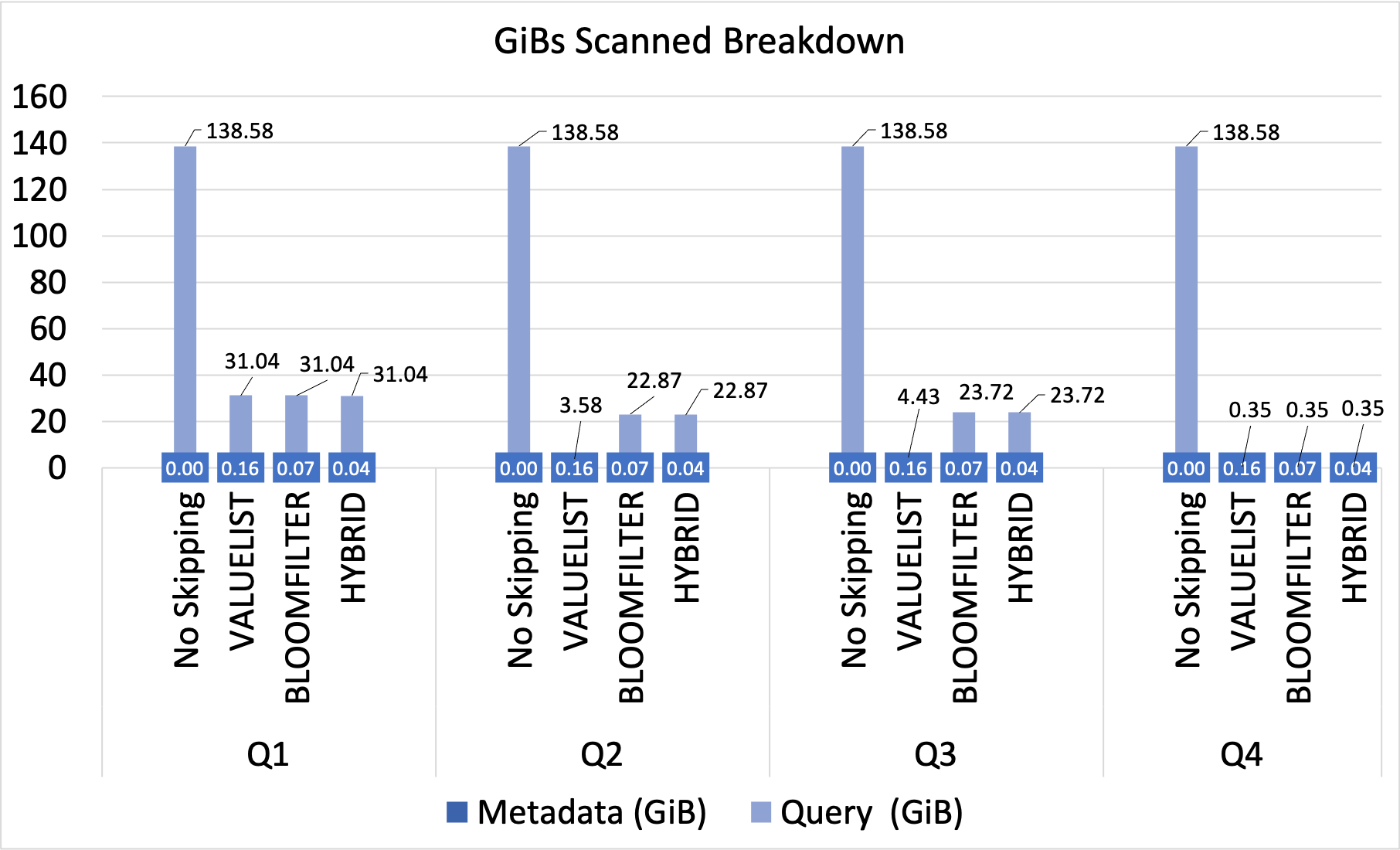}} 
    \caption{Breakdown of time spent on data and metadata processing for 4 queries on cloud database logs and corresponding bytes scanned}
    \label{fig:breakdown}
\end{figure*}

Figure~\ref{fig:breakdown} shows time and bytes scanned for 4 queries searching for different values of the db\_name column (cloud database logs dataset). The queries retrieve 8 columns, and we compare ValueList, BloomFilter and Hybrid indexes on the db\_name column, and in all cases either ValueList or the Hybrid index outperforms BloomFilter (whereas BloomFilter is the index most widely adopted in practice).  There is a clear correspondence between bytes scanned and query completion times, and data skipping reduces query times roughly between x3 and x20. In all cases, the time spent on metadata processing is a small fraction of the overall time. For all queries, the Hybrid index requires the least metadata processing time because of its smaller size. For Q4, when almost all data is skipped, the Hybrid index is superior for this reason. For Q2 and Q3, Hybrid and BloomFilter incur false positives and so retrieve more data than ValueList, resulting in longer query times. 

We point out that for this scenario it only takes around 3 queries to save the 10 mins that were spent on indexing the db\_name column.
On the other hand, the overhead for all queries (selective and non selective) with all indexes is less than 20 seconds per query. 
In terms of bytes scanned, we scanned 6.73 GB to index the db\_name column, whereas each query saves over 100GB (because of the additional columns retrieved). 
Therefore in terms of cost, a user can achieve payback after a single query.

\subsection{Data Skipping for Geospatial UDFs}
\label{sec:geoudfs}
We demonstrate data skipping for queries with predicates containing UDFs. To our knowledge, no other SQL engine supports this, since query optimizers typically know very little about UDFs. 
We used our extensible APIs to create filters that identify UDFs from IBM's geospatial toolkit\cite{geospatialtoolkit} and map them to MinMax and GeoBox index types. 
Supported predicates include containment, intersection, distance and many more \cite{geospatialtoolkitfunctions}.

For example, the following query retrieves all data whose location is in the Bermuda Triangle.
Without data skipping, the entire dataset needs to be scanned. 
\begin{verbatim}
SELECT * FROM weather WHERE 
ST_CONTAINS(ST_WKTToSQL(
'POLYGON((-64.73 32.31,...))'), 
ST_POINT(lat, lng))
\end{verbatim}
In order to support skipping we can either use the GeoBox index on the pair of lat/lng columns, or use independent MinMax indexes on both lat and lng. For each case we map the relevant UDFs to the corresponding Clauses. The GeoBox index has the advantage that it can handle lower layout factors by using multiple boxes per object. Since we partitioned the dataset according to lat/lng, 
the MinMax approach is also effective.

Figure \ref{fig:geospatialskippingvsnoskipping} compares running ST\_Contains queries with and without data skipping.\footnote{The results for ST\_Distance are similar.} The queries were run on an extrapolation of the weather dataset to a 5 year period. We used MinMax indexes resulting in 11MB of metadata for close to 12TB of data. 
The specific query we ran has the same form as our example query, and selects data with location in the Research Triangle area of North Carolina, with time windows ranging between 1 to 12 months. We achieved a cost and performance gap which is over 2 orders of magnitude - the gap increases in proportion to the size of the time window. For a 5 month window we achieved a x240 speedup. 
The cost gaps reflected by amount of data scanned are similar. We conclude that even with a high layout factor, running queries with UDFs directly on big datasets is clearly not feasible without extensible data skipping.

\begin{figure}
    \subfloat{\includegraphics[width=0.24\textwidth]{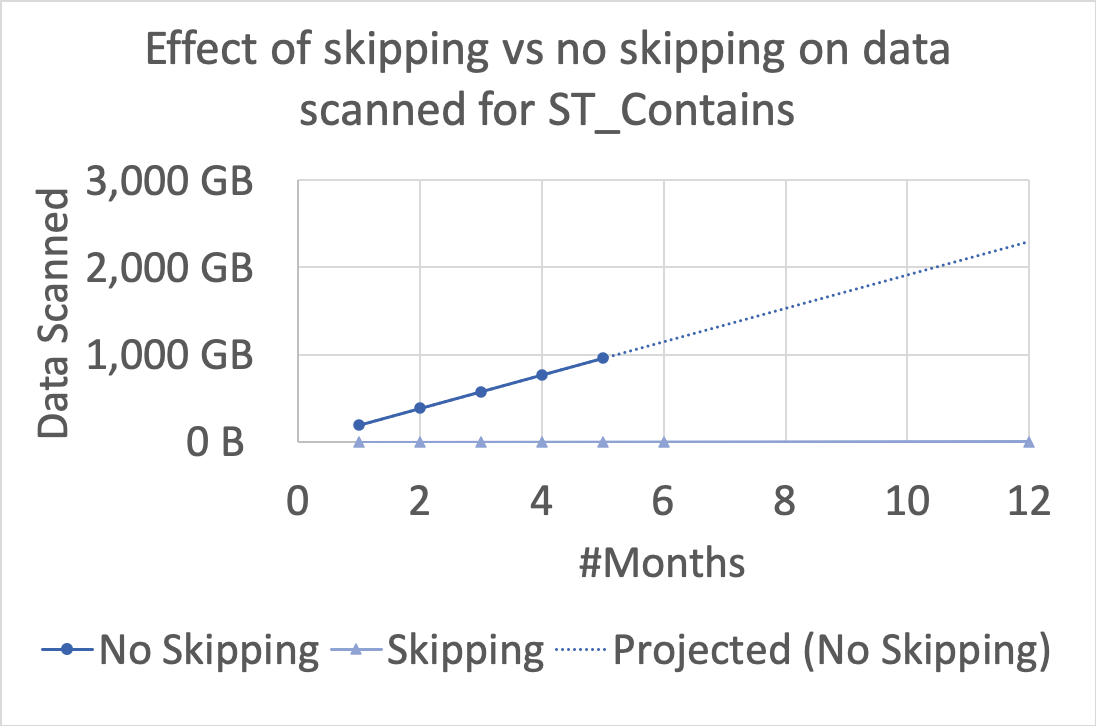}}
    \subfloat{\includegraphics[width=0.24\textwidth]{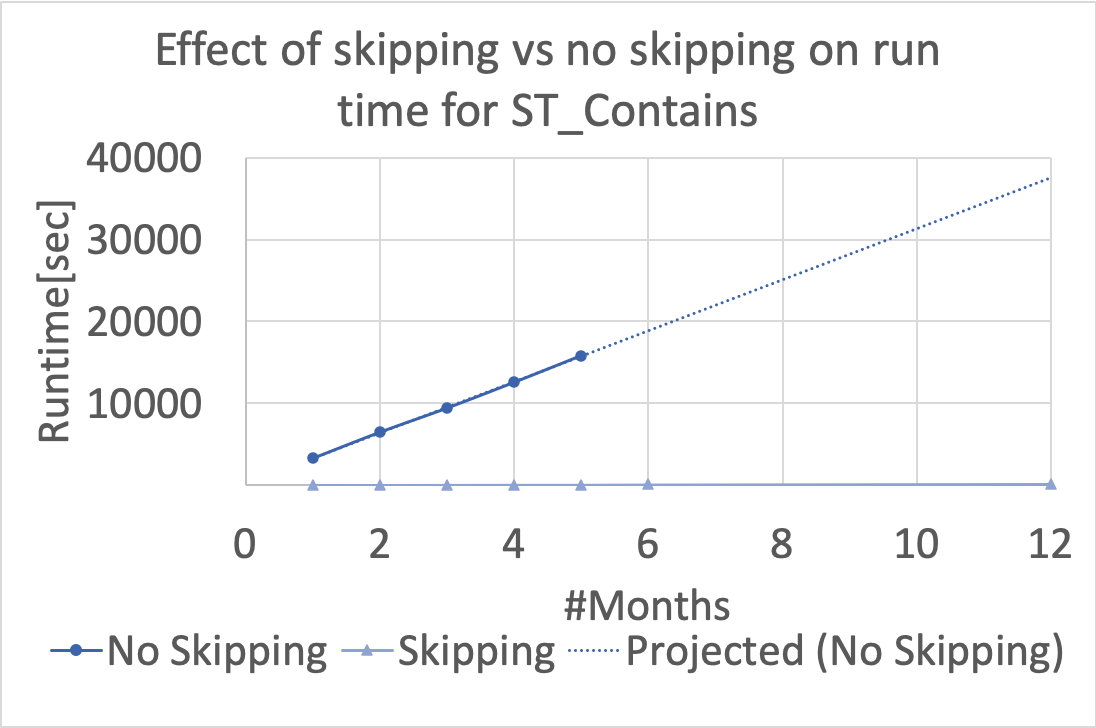}}
    \caption{Effects of skipping versus no skipping for ST\_Contains applied to the weather dataset with varying time window sizes}
    \label{fig:geospatialskippingvsnoskipping}
\end{figure}

\subsection{Benefits of Centralized Metadata}
\label{sec:centralized}
An alternative approach is to apply geospatial data layout and rewrite queries to exploit min/max metadata, if available in the storage format. This approach requires users to rewrite queries manually, or else query rewrite needs to be implemented for each query template. For example, the previous query could be rewritten to the one below
\begin{verbatim}
SELECT * FROM weather WHERE 
ST_CONTAINS(ST_WKTToSQL(
'POLYGON((-64.73 32.31,...))'), 
ST_POINT(lat, lng)) 
AND lat BETWEEN 18.43 AND 32.31 
AND lng BETWEEN -80.19 AND -64.73
\end{verbatim}

Our approach uses centralized metadata which avoids reading the footers of irrelevant Parquet/ORC objects altogether. This achieves a performance boost for 2 main reasons: overheads for each GET requests are relatively high for object storage, and Spark cluster resources are used more uniformly and effectively. The bytes scanned are reduced both by avoiding reading irrelevant footers and by metadata compression, which lowers cost.
Figure \ref{fig:geospatialskippingvsrewrite} compares the cost and performance of extensible data skipping to a query rewrite approach.
Since the data is partitioned geospatially, both identify the same objects as irrelevant.
However, 
our centralized metadata approach performs x3.6 better at run time at x1.6 lower cost for 5 year time windows, demonstrating significant benefit. 

\begin{figure}
    \subfloat{\includegraphics[width=0.24\textwidth]{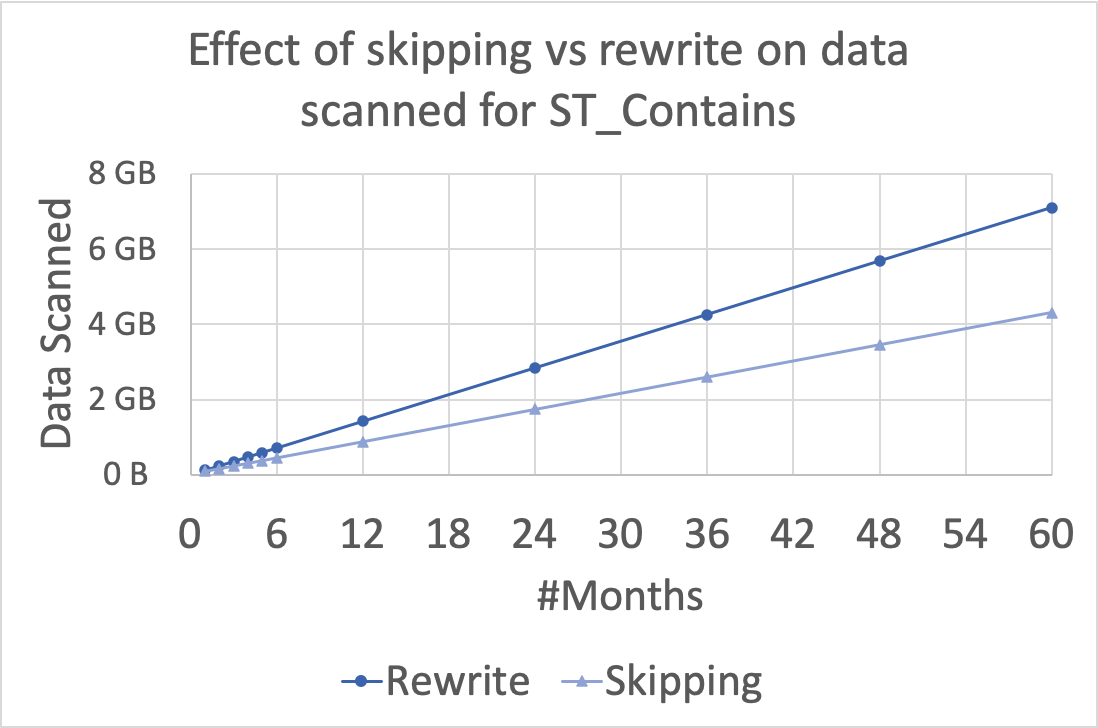}}
     \subfloat{\includegraphics[width=0.24\textwidth]{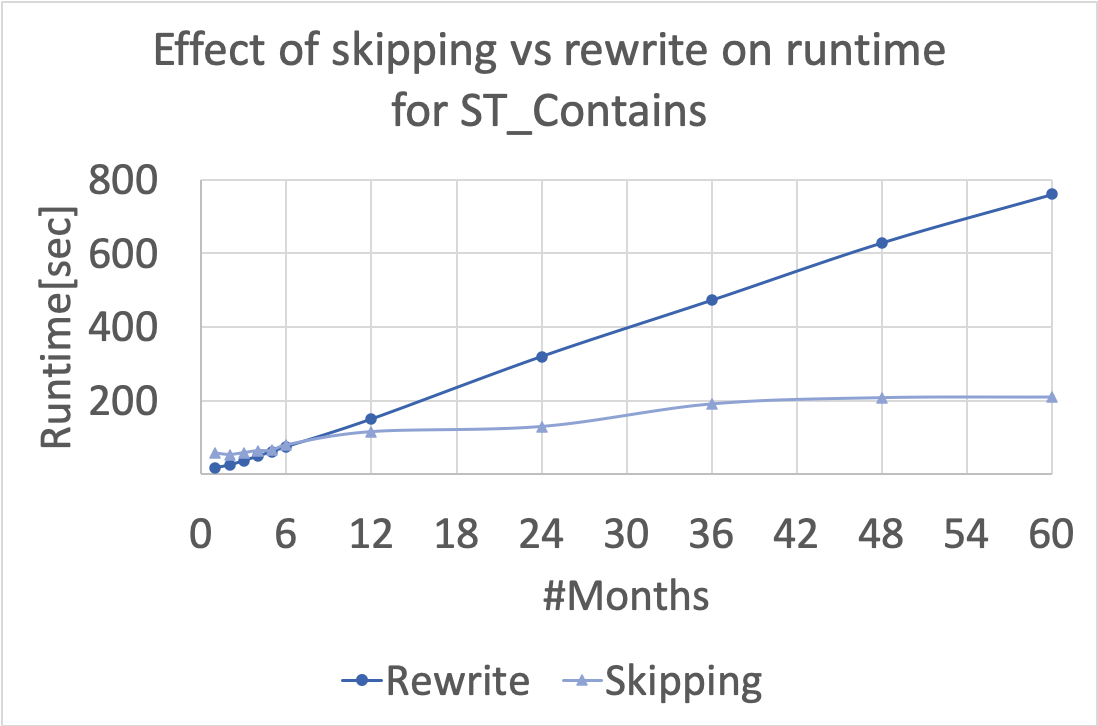}}
    \caption{Effects of skipping versus rewrite approach for ST\_Contains applied to the weather dataset with varying time window sizes}
    \label{fig:geospatialskippingvsrewrite}
\end{figure}

\subsection{Prefix/Suffix Matching}
SQL supports pattern matching using the LIKE operator, supporting single and multi-character wildcards. 
We added prefix and suffix indexes to support predicates of the form LIKE 'pattern\%' and LIKE '\%pattern' respectively. 
The indexes accept a length as a parameter and store a list of distinct prefixes (suffixes) appearing in each object. 
This is more efficient and results in smaller indexes compared to value list when a column's prefixes/suffixes are repetitive. 
\footnote{A trie based implementation is a topic for further work.}

In figure \ref{fig:patternmatching} we present the skipping effectiveness indicators for prefix/suffix matching on the db\_name column and prefix matching on the http\_request column of the cloud database logs dataset. 
For the db\_name column we stored prefixes and suffixes of length 15, and for the http\_request column we stored prefixes of length 20. Note the average column lengths for these columns are much higher. 
We generated a workload for each index consisting of 50 queries. For the prefix workloads, each query has a LIKE 'pattern\%' predicate, where the pattern is a random column value in the dataset with prefix of random size up to the column value length. The suffix workload is generated similarly.

Overall the aim is to bring the scanning factor as close as possible to the selectivity. The extent to which this is possible depends on how close we can bring the layout and metadata factors to 1 (equation \ref{eqn:factorsequation}). Despite relatively low layout factors (layout was not done according to the queried columns), good skipping is achievable. All indexes shown here achieve metadata factor close to 1, despite storing only prefixes/suffixes, and give a range of beneficial scanning factors.

\begin{figure}[h]
\centering
\includegraphics[width=3in]{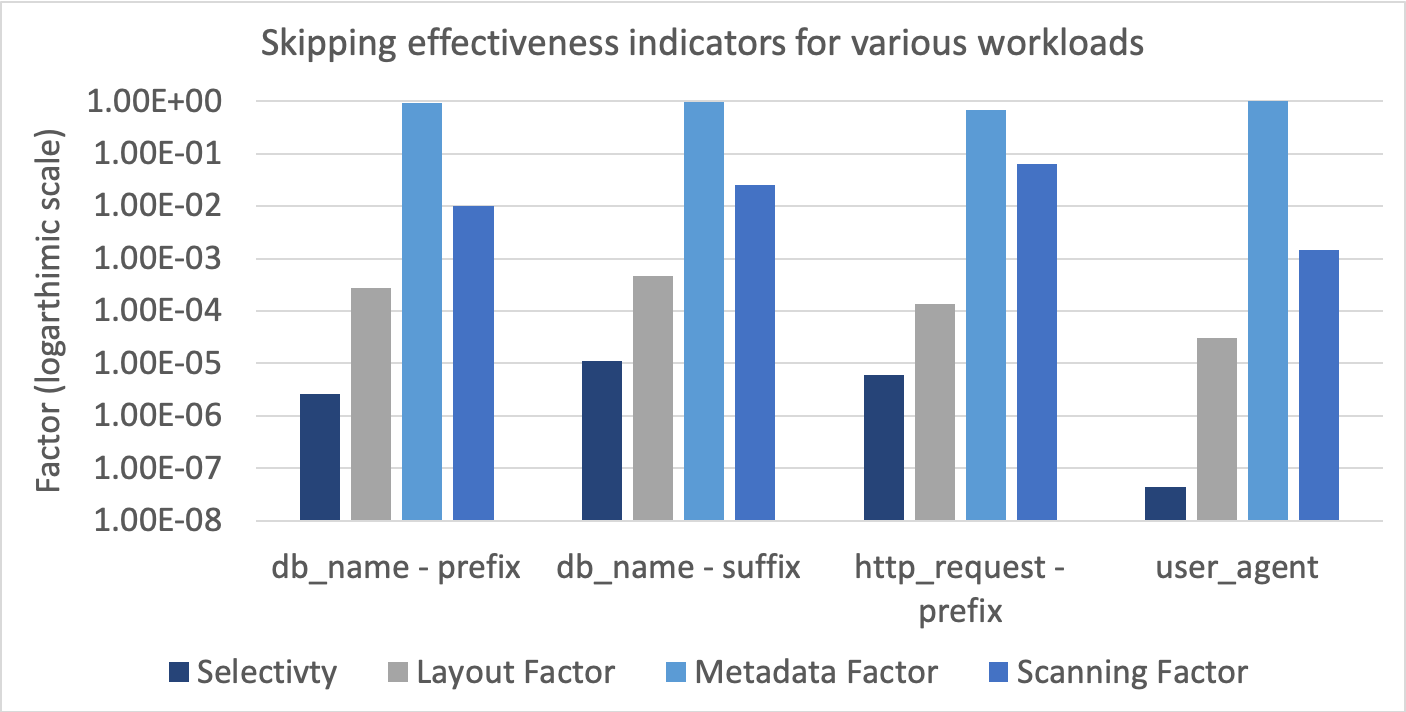}
\caption{Skipping effectiveness indicators for prefix/suffix and format specific user agent indexes. For selectivity and scanning factors lower is better, for metadata and layout factors higher is better. With approximately equal metadata factors, a highly selective (selectivity closer to 0) user\_agent workload makes up for a significantly lower layout factor, achieving the best scanning factor overall. All indexes are beneficial, achieving between 1/1000 and 1/10 scanning factors.}
\label{fig:patternmatching}
\end{figure}

In figure \ref{fig:metadatafactorprefixlength} we show the effects of increasing the prefix length in terms of skipping indicators as well as metadata size. In this case we generated a different random workload for the db\_name column with 20 queries\footnote{The selectivity is slightly different from that shown in figure \ref{fig:patternmatching} because the workload is a different set of queries.}.    
Here the selectivity and layout factors are fixed so the scanning factor is inversely proportional to the metadata factor. According to equation 
\ref{eqn:indicators} the lowest possible scanning factor is around $10^{-2}$. 
We achieve this for prefix length 15 with an order of magnitude smaller metadata compared to a value list index.

\begin{figure}[h]
\centering
\includegraphics[width=3in]{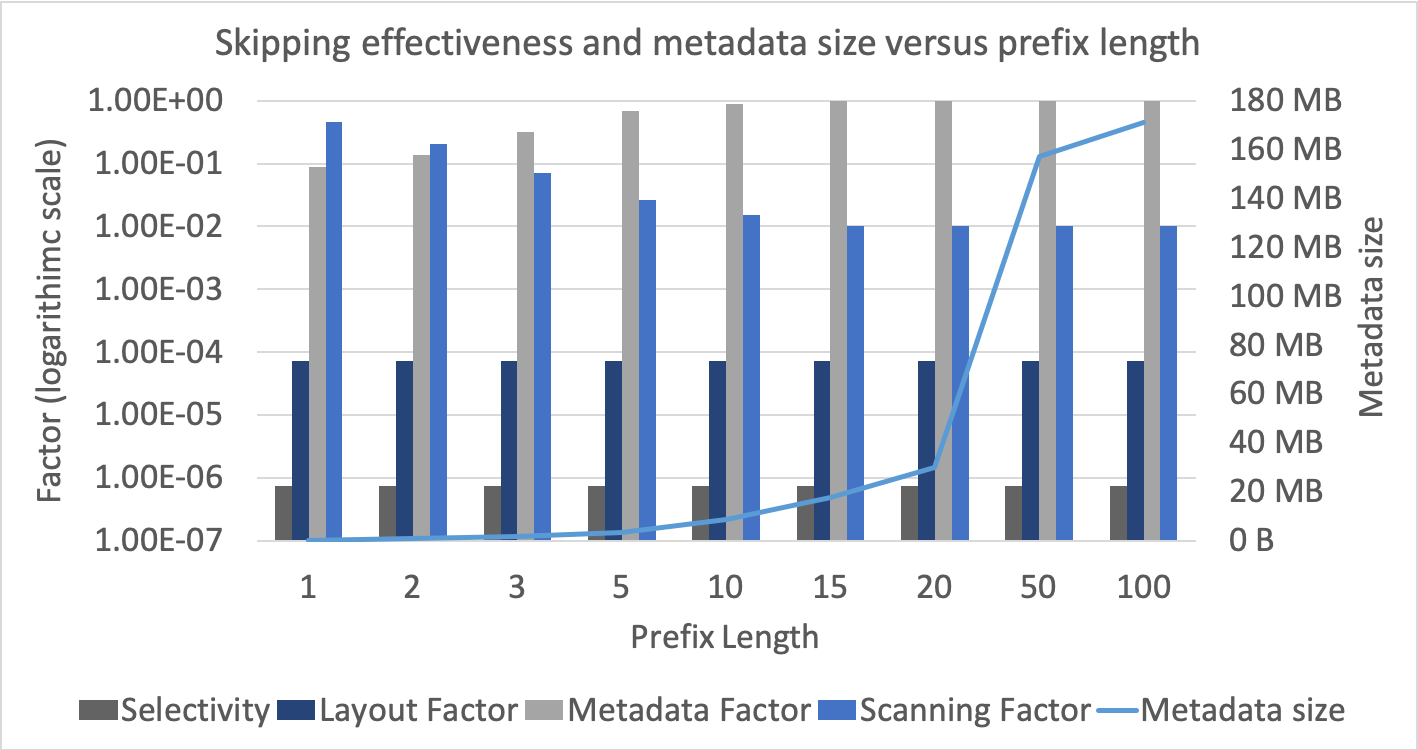}
\caption{Skipping effectiveness indicators and metadata size for a prefix index on the db\_name column w.r.t. prefix length. 
}
\label{fig:metadatafactorprefixlength}
\end{figure}

\subsection{Format Specific Indexing} \label{formatskipping}
As is typical for log analytics, many columns in our logs datasets e.g. db\_name, http\_request have additional application specific (nested) structure not captured by prefix/suffix indexes, such as hierarchical paths and parameter lists.
We show how to index such columns, avoiding the need to add new data columns, which is often not feasible for large and fast growing data.

We indexed the user\_agent column\cite{rfc7231} of both datasets to track the history of malicious http requests. Our extensible framework enables easy integration with open source tools.
We used the Yauaa library\cite{yauaa}, benefitting from its accurate client identification\cite{uablog},  its handling of idiosyncrasies in the format, and its keeping up to date with frequent client changes. 
The library parses a user agent string into a set of field name-value pairs. To generate the metadata, we parsed out the agent name field, and stored a list of names per object.
We also implemented the getAgentName UDF. The query below retrieves all malicious http requests in the log.
\begin{verbatim}
SELECT * FROM storagelogs
WHERE getAgentName(user_agent)=`Hacker'
\end{verbatim}
In figure \ref{fig:patternmatching} we show the skipping effectiveness indicators for this index, using a workload consisting of 50 queries, where for each query we chose a random agent name appearing in the dataset. 
This highly selective workload enables very good skipping even with low layout factor. 

\section{Related Work}
\label{sec:related}
{\bf Hive style partitioning} 
partitions data according to certain attributes, encoded as metadata in filenames.
Spark/Hadoop use this metadata for partition pruning.
Using this technique alone is inflexible since only one hierarchy is possible, changing the partitioning scheme requires rewriting the entire dataset when using object storage (which has no rename operation), and range partitioning is not supported. 
Our framework for extensible data skipping is complementary to this technique.

{\bf Parquet and ORC} support min/max metadata stored in file footers, as well as bloom filters\cite{parquet,orc}. Both support dictionary encodings which provide some of the benefit of our value list indexes. Note that these encodings are primarily designed to achieve compression, so in some cases other encodings are used instead, compromising skipping\cite{braams2018predicate}. 
Both formats require all objects to be partially read to process a query, 
and footer processing is not read optimised. 
Neither format allows adding metadata to an existing file, whereas our approach allows dynamic indexing choices.
Parquet allows user defined predicates as part of a Filter API, however this is designed to work with existing metadata only. Since query engines have not exposed similar APIs this does not achieve extensible skipping.  

{\bf Data skipping} 
Min/max metadata, also known as synopsis and zone maps, is commonly used in commercial DBMSs \cite{raman2013db2, ziauddin2017dimensions} and some data lakes\cite{databricks}. 
Other index types have been explored in research papers e.g. storing small materialized aggregates (SMAs) per object column such as min, max, count, sum and histograms\cite{sma}. 
Brighthouse\cite{brighthouse} defines a data skipping index similar to Gap List. Their Character Map index could be easily defined using our APIs.
Recently range sets (similar to our gap lists) have been proposed to apply data skipping to queries with joins\cite{dips}.

{\bf Data layout research} Many efforts optimize data layout to achieve optimal skipping e.g.\cite{sun2014fine, aqwa, shanbhag2017robust, quilts}. We survey those most relevant.
The fine grained approach\cite{sun2014fine} adopts bit vectors as the only supported metadata type, where 1 bit is stored per workload feature. To obtain a list of features one needs to analyze the workload, inferring subsumption relationships between predicates and applying frequent itemset mining. This approach does not work well when the workload changes. To handle a UDF, the user needs to implement a subsumption algorithm for it, although this aspect is not explained in the paper. 
On the other hand, our framework enables defining a feature based (bit vector) metadata index, allowing feature based data skipping when applicable.  

Both AQWA\cite{aqwa} and the robust approach\cite{shanbhag2017robust} address changing workloads by building an adaptive kd-tree based partitioner which exploits existing workload knowledge and is updated when as the workload changes. AQWA focuses on geospatial workloads only whereas the robust approach handles the more general case. Data layout changes are made when beneficial according to a cost benefit analysis. kd-trees apply to ordered column types, and generate min/max metadata only. Other layout techniques are needed to handle categorical data and application specific data types such as server logs and images.

{\bf Extensible Indexing} 
Hyperspace defines itself as an extensible indexing framework for Apache Spark\cite{hyperspace}, although at the time of this writing it only supports covering indexes which require duplicating the entire dataset, and does not include any data skipping (chunk elimination) indexes.
The Generalized Search Tree (GiST) \cite{HNP95,kornacker} 
focused on generalizing inverted index access methods 
with APIs such that new access methods can be easily integrated into the core DBMS supporting efficient query processing, concurrency control and recovery. Our work focuses on data skipping for big data where classical inverted indexes are not appropriate, and a different set of extensible APIs is needed. 
 
{\bf Applications}
Prior work addressed specific applications such as geospatial analytics\cite{geosurvey,aqwa}, and range and k nearest neighbour (kNN) queries for metric functions e.g.\cite{Ciaccia,chavez2000effective} without providing general frameworks.

\section{Conclusions}
\label{sec:conclusions}

Our work is the first extensible data skipping framework, allowing developers to define new metadata types and supporting data skipping for queries with arbitrary UDFs. Moreover our work enjoys the performance advantages of consolidated metadata, is data format agnostic, and has been integrated with Spark in several IBM products/services. 
We demonstrated that our framework can provide significant performance and cost gains while adding relatively modest overheads, and can be applied to a diverse class of applications, including geospatial and server log analytics.
Our work is not inherently tied to Spark and could be integrated in any system with the ability to intercept the list of objects to be retrieved. 
Further work includes integration into additional SQL engines and automatic index selection.

\section{Acknowledgements}
\label{sec:ack}
The authors would like to thank Ofer Biran, Michael Factor and Yosef Moatti for their close involvement in this work and for providing valuable review feedback. Thanks to Linsong Chu, Pranita Dewan, Raghu Ganti and Mudhakar Srivatsa for collaboration on the geospatial integration, and to Michael Haide, Daniel Pittner and Torsten Steinbach for fruitful long term collaboration. Thanks to Guy Gerson for involvement in the initial stages of this work.  

This research was partially funded by the EU Horizon 2020 research and innovation programme under grant agreement no. 779747.

\bibliographystyle{abbrv}
\bibliography{extensible_ds}  

\begin{thebibliography}{10}

\bibitem{athena}
{A}mazon {A}thena pricing.
\newblock {\scriptsize \url {https://aws.amazon.com/athena/pricing/}}.

\bibitem{hudi}
{A}pache {H}udi.
\newblock {\scriptsize \url {https://hudi.apache.org/}}.

\bibitem{iceberg}
{A}pache {I}ceberg.
\newblock {\scriptsize \url {https://iceberg.apache.org/}}.

\bibitem{orc}
{A}pache {ORC}.
\newblock {\scriptsize \url {https://orc.apache.org}}.

\bibitem{parquet}
Apache {P}arquet.
\newblock {\scriptsize \url {https://parquet.apache.org/}}.

\bibitem{databricks}
Databricks {D}elta {G}uide.
\newblock {\scriptsize \url
  {https://docs.databricks.com/delta/optimizations/file-mgmt.html#data-skipping}}.

\bibitem{delta}
{D}elta {L}ake (open source version).
\newblock {\scriptsize \url {https://delta.io/}}.

\bibitem{elasticsearch}
{E}lastic {S}earch.
\newblock {\scriptsize \url {https://www.elastic.co}}.

\bibitem{geospatialtoolkitfunctions}
{G}eospatial {T}oolkit functions.
\newblock {\scriptsize \url
  {https://www.ibm.com/support/knowledgecenter/SSCJDQ/com.ibm.swg.im.dashdb.analytics.doc/doc/geo_functions.html}}.

\bibitem{iae}
{IBM} {A}nalytics {E}ngine.
\newblock {\scriptsize \url {https://www.ibm.com/cloud/analytics-engine}}.

\bibitem{cp4d}
{IBM} {C}loud {P}ak for {D}ata.
\newblock {\scriptsize \url {https://www.ibm.com/products/cloud-pak-for-data}}.

\bibitem{sqlquery}
{IBM} {C}loud {SQL} {Q}uery.
\newblock {\scriptsize \url {https://www.ibm.com/cloud/sql-query}}.

\bibitem{sqlquerypricing}
{IBM} {C}loud {SQL} {Q}uery {P}ricing.
\newblock {\scriptsize \url
  {https://cloud.ibm.com/catalog/services/sql-query}}.

\bibitem{geospatialtoolkit}
{IBM} {G}eospatial {T}oolkit.
\newblock {\scriptsize \url
  {https://www.ibm.com/support/knowledgecenter/SSCJDQ/com.ibm.swg.im.dashdb.analytics.doc/doc/geo_intro.html}}.

\bibitem{hyperspace}
{M}icrosoft {H}yperspace.
\newblock {\scriptsize \url {https://github.com/microsoft/hyperspace}}.

\bibitem{parquetspec}
{P}arquet {M}odular {E}ncryption.
\newblock {\scriptsize \url
  {https://github.com/apache/parquet-format/blob/master/Encryption.md}}.

\bibitem{parqencperf}
{T}est {D}riving {P}arquet {E}ncryption.
\newblock {\scriptsize \url
  {https://medium.com/@tomersolomon/test-driving-parquet-encryption-3d5319f5bc22}}.

\bibitem{yauaa}
{Y}auaa: {Y}et {A}nother {U}ser{A}gent {A}nalyzer.
\newblock {\scriptsize \url {https://yauaa.basjes.nl}}.

\bibitem{stocator}
Stocator - {S}torage {C}onnector for {A}pache {S}park.
\newblock {\scriptsize \url {https://github.com/CODAIT/stocator}}, 2019.

\bibitem{aqwa}
A.~M. Aly, A.~R. Mahmood, M.~S. Hassan, W.~G. Aref, M.~Ouzzani, H.~Elmeleegy,
  and T.~Qadah.
\newblock Aqwa: adaptive query workload aware partitioning of big spatial data.
\newblock {\em Proceedings of the VLDB Endowment}, 8(13):2062--2073, 2015.

\bibitem{armbrust2015spark}
M.~Armbrust, R.~S. Xin, C.~Lian, Y.~Huai, D.~Liu, J.~K. Bradley, X.~Meng,
  T.~Kaftan, M.~J. Franklin, A.~Ghodsi, et~al.
\newblock Spark sql: Relational data processing in spark.
\newblock In {\em Proceedings of the 2015 ACM SIGMOD international conference
  on management of data}, pages 1383--1394. ACM, 2015.

\bibitem{sel-paper}
C.~Ballinger.
\newblock {TPC-D}: {B}enchmarking for {D}ecision {S}upport.
\newblock {\scriptsize \url
  {http://people.cs.uchicago.edu/~chliu/doc/benchmark/chapter3.pdf}}.

\bibitem{uablog}
N.~Basjes.
\newblock {Y}auaa: {{M}aking sense of the user agent string}.
\newblock {\scriptsize \url
  {https://techlab.bol.com/making-sense-user-agent-string}}.

\bibitem{hot}
R.~Binna, E.~Zangerle, M.~Pichl, G.~Specht, and V.~Leis.
\newblock Hot: A height optimized trie index for main-memory database systems.
\newblock In {\em Proceedings of the 2018 International Conference on
  Management of Data}, SIGMOD '18, pages 521--534, New York, NY, USA, 2018.
  ACM.

\bibitem{bloom}
B.~H. Bloom.
\newblock Space/time trade-offs in hash coding with allowable errors.
\newblock {\em Communications of the ACM}, 1970.

\bibitem{cognitivedb}
R.~Bordawekar, B.~Bandyopadhyay, and O.~Shmueli.
\newblock Cognitive database: {A} step towards endowing relational databases
  with artificial intelligence capabilities.
\newblock {\em CoRR}, abs/1712.07199, 2017.

\bibitem{braams2018predicate}
B.~Braams.
\newblock {\em Predicate Pushdown in Parquet and Apache Spark}.
\newblock PhD thesis, Universiteit van Amsterdam, 2018.

\bibitem{chavez2000effective}
E.~Ch{\'a}vez and G.~Navarro.
\newblock An effective clustering algorithm to index high dimensional metric
  spaces.
\newblock In {\em Proceedings Seventh International Symposium on String
  Processing and Information Retrieval. SPIRE 2000}, pages 75--86. IEEE, 2000.

\bibitem{ChlamtacDKKR16}
E.~Chlamt{\'{a}}c, M.~Dinitz, C.~Konrad, G.~Kortsarz, and G.~Rabanca.
\newblock The densest k-subhypergraph problem.
\newblock {\em CoRR}, abs/1605.04284, 2016.

\bibitem{Ciaccia}
P.~Ciaccia, M.~Patella, and P.~Zezula.
\newblock M-tree: An efficient access method for similarity search in metric
  spaces.
\newblock In {\em Proceedings of the 23rd International Conference on Very
  Large Data Bases}, VLDB '97, pages 426--435, San Francisco, CA, USA, 1997.
  Morgan Kaufmann Publishers Inc.

\bibitem{eagleEyed}
M.~Y. Eltabakh, F.~\"{O}zcan, Y.~Sismanis, P.~J. Haas, H.~Pirahesh, and
  J.~Vondrak.
\newblock Eagle-eyed elephant: Split-oriented indexing in hadoop.
\newblock In {\em Proceedings of the 16th International Conference on Extending
  Database Technology}, EDBT '13, pages 89--100, New York, NY, USA, 2013. ACM.

\bibitem{HNP95}
J.~M. Hellerstein, J.~F. Naughton, and A.~Pfeffer.
\newblock Generalized search trees for database systems.
\newblock In {\em Proceedings of the 21th International Conference on Very
  Large Data Bases}, VLDB '95, pages 562--573, San Francisco, CA, USA, 1995.
  Morgan Kaufmann Publishers Inc.

\bibitem{pywren}
E.~Jonas, Q.~Pu, S.~Venkataraman, I.~Stoica, and B.~Recht.
\newblock Occupy the cloud: Distributed computing for the 99\%.
\newblock In {\em Proceedings of the 2017 Symposium on Cloud Computing}, pages
  445--451, 2017.

\bibitem{sunithablog}
S.~Kambhampati.
\newblock {C}ustomize {S}park for your deployment.
\newblock {\scriptsize \url
  {https://developer.ibm.com/technologies/analytics/blogs/customize-spark-for-your-deployment/}},
  2019.

\bibitem{dips}
S.~Kandula, L.~Orr, and S.~Chaudhuri.
\newblock Pushing data-induced predicates through joins in big-data clusters.
\newblock {\em Proceedings of the VLDB Endowment}, 13(3):252--265, 2019.

\bibitem{kornacker}
M.~Kornacker.
\newblock High-performance extensible indexing.
\newblock In {\em Proceedings of the 25th International Conference on Very
  Large Data Bases}, VLDB '99, pages 699--708, San Francisco, CA, USA, 1999.
  Morgan Kaufmann Publishers Inc.

\bibitem{sma}
G.~Moerkotte.
\newblock Small materialized aggregates: A light weight index structure for
  data warehousing.
\newblock In {\em Proceedings of the 24rd International Conference on Very
  Large Data Bases}, VLDB '98, pages 476--487, San Francisco, CA, USA, 1998.
  Morgan Kaufmann Publishers Inc.

\bibitem{quilts}
S.~Nishimura and H.~Yokota.
\newblock Quilts: Multidimensional data partitioning framework based on
  query-aware and skew-tolerant space-filling curves.
\newblock In {\em Proceedings of the 2017 ACM International Conference on
  Management of Data}, pages 1525--1537. ACM, 2017.

\bibitem{geosurvey}
V.~Pandey, A.~Kipf, T.~Neumann, and A.~Kemper.
\newblock How good are modern spatial analytics systems?
\newblock {\em Proceedings of the VLDB Endowment}, 11(11):1661--1673, 2018.

\bibitem{rfc7231}
E.~R.~Fielding, Ed. J.~Reschke.
\newblock Hypertext transfer protocol (http/1.1): Semantics and content.
\newblock RFC 7231, RFC Editor, June 2014.

\bibitem{raman2013db2}
V.~Raman et~al.
\newblock {DB2} with {BLU} acceleration: {S}o much more than just a column
  store.
\newblock {\em Proceedings of the VLDB Endowment}, 6(11):1080--1091, 2013.

\bibitem{shanbhag2017robust}
A.~Shanbhag, A.~Jindal, S.~Madden, J.~Quiane, and A.~J. Elmore.
\newblock A robust partitioning scheme for ad-hoc query workloads.
\newblock In {\em Proceedings of the 2017 Symposium on Cloud Computing}. ACM,
  2017.

\bibitem{brighthouse}
D.~Slezak, J.~Wróblewski, V.~Eastwood, and P.~Synak.
\newblock Brighthouse: An analytic data warehouse for ad-hoc queries.
\newblock {\em Proceedings of the VLDB Endowment}, 1:1337--1345, 08 2008.

\bibitem{sun2014fine}
L.~Sun, M.~J. Franklin, S.~Krishnan, and R.~S. Xin.
\newblock Fine-grained partitioning for aggressive data skipping.
\newblock In {\em Proceedings of the 2014 SIGMOD}. ACM, 2014.

\bibitem{htm}
A.~S. Szalay, J.~Gray, G.~Fekete, P.~Z. Kunszt, P.~Kukol, and A.~Thakar.
\newblock Indexing the sphere with the hierarchical triangular mesh, 2007.

\bibitem{vernik2018stocator}
G.~Vernik, M.~Factor, E.~K. Kolodner, P.~Michiardi, E.~Ofer, and F.~Pace.
\newblock Stocator: providing high performance and fault tolerance for apache
  spark over object storage.
\newblock In {\em 2018 18th IEEE/ACM International Symposium on Cluster, Cloud
  and Grid Computing (CCGRID)}, pages 462--471. IEEE, 2018.

\bibitem{surf}
H.~Zhang, H.~Lim, V.~Leis, D.~G. Andersen, M.~Kaminsky, K.~Keeton, and
  A.~Pavlo.
\newblock Surf: Practical range query filtering with fast succinct tries.
\newblock In {\em Proceedings of the 2018 International Conference on
  Management of Data}, SIGMOD '18, pages 323--336, New York, NY, USA, 2018.
  ACM.

\bibitem{ziauddin2017dimensions}
M.~Ziauddin, A.~Witkowski, Y.~J. Kim, D.~Potapov, J.~Lahorani, and M.~Krishna.
\newblock Dimensions based data clustering and zone maps.
\newblock {\em Proceedings of the VLDB Endowment}, 10(12):1622--1633, 2017.

\end{thebibliography}

\newpage

\appendices
\section{Formal Description and Proofs}
\label{appendix:correctness}

We point out that negation of an expression $e$ can be handled if we can construct a Clause representing $\neg e$. \begin{definition}
Let $c$ be a Clause that represents an expression $e$, we say that a Clause $c_e^*$ is a {\bf negation of $c$ with respect to $e$} if $c_e^* \wr \neg e$
\end{definition}
In the worst case, our algorithm will return None, meaning that no skipping can be done. 
\setlength{\textfloatsep}{0pt}
\begin{algorithm}
	\SetKwInOut{Input}{input} 
	\SetKwInOut{Output}{output}
	\Input{an expression tree $e$ with root $v$}
	\Output{A Clause $C$ (possibly None)}
	\BlankLine
	
	\uIf{$e = AND(a,b)$}{  \tcc{Case $1$}
		Let $\phi \coloneqq \bigwedge_{\gamma \in CS(v)}\gamma$ \\
		Run the algorithm recursively on $a$ and $b$ and denote the result by $\alpha,\beta$ respectively\\
		Return $\alpha \wedge \beta \wedge \phi$ \
	}\uElseIf{$e = OR(a,b)$}{  \tcc{Case $2$}
		Let $\phi \coloneqq \bigwedge_{\gamma \in CS(v)}\gamma$ \\
		Run the algorithm recursively on $a$ and $b$ and denote the result by $\alpha,\beta$ respectively\\
		Return $(\alpha \vee \beta)  \wedge \phi$ \
	}\uElseIf{$e = NOT(a)$}{  \tcc{Case $3$}
		Run the algorithm recursively on $a$, denote the result by $\alpha$ \\
		\uIf{$\alpha$ can be negated with respect to $a$}{
			Return $\alpha_a^*$ \
		}\Else{
			Return None \
		}
	}\uElse{ \tcc{Case $4$} 
		Return  $\bigwedge_{\gamma \in CS(v)}\gamma$ \
	}
\caption{Merge-Clause}
\label{alg:mergeclause}
\end{algorithm}	
\begin{algorithm}
	\SetKwInOut{Input}{input} 
	\SetKwInOut{Output}{output}
	\Input{a boolean expression $e$, a sequence of filters $f_1,...,f_n$}
	\Output{A Clause (possibly None) $c$}
	\BlankLine
	Apply $f_1,...,f_n$ to $e$\\
	Run $Merge-Clause(e)$ and return the result\
\caption{Generate-Clause}
\label{alg:generateclause}
\end{algorithm}	
\subsection{Correctness}
Given a query $Q$ with ET $e$, we apply algorithm \ref{alg:generateclause} to achieve a Clause $C$ using the filters defined using our extensible APIs and registered in our system. We show that $C \wr e$. Therefore we can safely skip all objects whose metadata does not satisfy $C$. 
\begin{remark}
A good perspective of how extensibility is achieved is by viewing each extensible part's role: \textbf{metadata types} stand for \textbf{what is the collected metadata}, \textbf{filters} stand for \textbf{how to utilize the available metadata on a given query}, and \textbf{metadata stores} stand for \textbf{how the metadata is stored}.
\end{remark}

\begin{thm}
\label{correctness-thm}
Let $e$ denote a boolean expression, and $f_1,...,f_k$ denote a sequence of $filters$.
Denote by $C$ the output of algorithm \ref{alg:generateclause}  on $e$ with $f_1,...,f_k$. Then $C \wr e$.
\end{thm}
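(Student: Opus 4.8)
The plan is to prove the statement by structural induction on the expression tree $e$, mirroring the four cases of Algorithm \ref{alg:mergeclause}. Before the induction I would isolate three elementary facts about the relation $\wr$ that make the combinators in Merge-Clause sound, working throughout with the pointwise logical combinations of clauses (which are boolean functions $U \to \{0,1\}$). First, if $c_1 \wr e$ and $c_2 \wr e$ then $c_1 \wedge c_2 \wr e$: any subset $S$ containing a row that satisfies $e$ satisfies both $c_1$ and $c_2$, hence their conjunction; iterating this, $\bigwedge_{\gamma \in CS(v)}\gamma \wr e_v$ for the subexpression $e_v$ rooted at $v$, using the defining property of a filter that every $\gamma$ ever attached to $v$ satisfies $\gamma \wr e_v$ — and the empty conjunction is the always-true clause $\top$, which represents everything vacuously. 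Second, if $c_1 \wr a$ and $c_2 \wr b$ then $c_1 \wedge c_2 \wr (a \wedge b)$: a row $r \in S$ satisfying $a \wedge b$ satisfies $a$, so $S$ satisfies $c_1$, and likewise $S$ satisfies $c_2$. Third, if $c_1 \wr a$ and $c_2 \wr b$ then $c_1 \vee c_2 \wr (a \vee b)$: a row $r \in S$ satisfying $a \vee b$ satisfies one of $a,b$, whence $S$ satisfies the corresponding $c_i$ and therefore $c_1 \vee c_2$. I would also adopt the convention that the output \emph{None} denotes $\top$, so that $\top \wr e$ for every $e$ and the recursive combinations degrade gracefully (in particular $\top$ always admits $\top$ as a negation with respect to any expression).

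With these in hand the induction is routine. In the base case (Case 4, a leaf or any node that is not AND/OR/NOT) the output is $\bigwedge_{\gamma \in CS(v)}\gamma$, which represents $e$ by the iterated first fact. For Case 1, $e = \mathrm{AND}(a,b)$: by the inductive hypothesis $\alpha \wr a$ and $\beta \wr b$, and $\phi \coloneqq \bigwedge_{\gamma \in CS(v)}\gamma \wr e$; the second fact gives $\alpha \wedge \beta \wr a \wedge b = e$, and conjoining with $\phi$ (first fact again) yields $\alpha \wedge \beta \wedge \phi \wr e$, the returned clause. For Case 2, $e = \mathrm{OR}(a,b)$: similarly $\alpha \wr a$, $\beta \wr b$, $\phi \wr e$, so the third fact gives $\alpha \vee \beta \wr a \vee b = e$, and conjoining with $\phi$ yields $(\alpha \vee \beta) \wedge \phi \wr e$, the returned clause. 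For Case 3, $e = \mathrm{NOT}(a)$: by the inductive hypothesis $\alpha \wr a$; if $\alpha$ can be negated with respect to $a$ the algorithm returns $\alpha_a^*$, which satisfies $\alpha_a^* \wr \neg a = e$ immediately from the definition of negation with respect to an expression, and otherwise it returns $\top$, with $\top \wr e$ trivially. Since Algorithm \ref{alg:generateclause} is exactly "apply $f_1,\ldots,f_k$, then run Merge-Clause", and applying filters only enlarges the sets $CS(v)$ while preserving the invariant $\gamma \wr e_v$ for every $\gamma \in CS(v)$, the conclusion transfers to Generate-Clause, establishing $C \wr e$.

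I do not anticipate a genuine obstacle: the content is a clean structural induction over four syntactic cases. The only points requiring care are (i) making the treatment of \emph{None} uniform, so that the conjunctions and disjunctions formed in the recursive cases are always well-defined and still represent the intended subexpression; and (ii) being explicit that the filter property — every clause ever attached to a vertex $v$ represents the subexpression rooted at $v$ — is an invariant maintained through the entire sequence $f_1,\ldots,f_k$, which is precisely what licenses treating $\phi = \bigwedge_{\gamma \in CS(v)}\gamma$ as representing $e_v$ in Cases 1, 2 and 4. Everything else is bookkeeping.
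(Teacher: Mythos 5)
Your proposal is correct and follows essentially the same route as the paper: your three ``elementary facts'' are precisely the paper's Lemmas~\ref{conj-lemma}, \ref{conj-multi-lemma} and \ref{disj-multi-lemma}, your structural induction over the four cases of Merge-Clause matches the paper's induction on the depth of the $\{\vee,\wedge,\neg\}$ skeleton, and your treatment of \emph{None} as the always-true clause and of the filter invariant on $CS(v)$ coincides with the paper's Assumption~\ref{repr-assumption} and its use in Lemma~\ref{correctness-lemma}. No gaps to report.
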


\subsection{Proof of theorem \ref{correctness-thm}}
To prove the theorem, we will use the following lemmas:

\begin{lemma}
\label{conj-lemma}
Let $e$ denote a boolean $expression$, let $c_1,c_2$ s.t. $c_1 \wr e \wedge c_2 \wr e$.
Then $(c_1 \wedge c_2) \wr e$.
\end{lemma}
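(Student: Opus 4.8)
The plan is to prove the lemma directly from Definition~\ref{representdef}, with no recourse to the structure of $e$ at all. First I would fix an arbitrary data subset $S \in U$ and assume the antecedent of the implication appearing in Definition~\ref{representdef} for the Clause $c_1 \wedge c_2$: that is, suppose there exists a row $r \in S$ that satisfies $e$. The goal is then reduced to showing that $S$ satisfies $c_1 \wedge c_2$, i.e.\ that $(c_1 \wedge c_2)(S) = 1$.

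The core step is to apply the two hypotheses separately to this same witnessing row. Since $c_1 \wr e$ and $r \in S$ satisfies $e$, Definition~\ref{representdef} yields that $S$ satisfies $c_1$, i.e.\ $c_1(S) = 1$. By the identical argument applied to $c_2 \wr e$ (using the very same row $r$), we get $c_2(S) = 1$. Hence $c_1(S) = c_2(S) = 1$, and by the pointwise semantics of conjunction on Clauses this is exactly the statement that $(c_1 \wedge c_2)(S) = 1$. Since $S$ was an arbitrary data subset, the implication ``some row of $S$ satisfies $e$ $\Rightarrow$ $S$ satisfies $c_1 \wedge c_2$'' holds for all $S \in U$, which is precisely $(c_1 \wedge c_2) \wr e$.

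The only thing requiring care is bookkeeping rather than a genuine obstacle: one must make explicit that the conjunction of two Clauses is again a Clause (a function $U \to \{0,1\}$) whose value at $S$ is $1$ iff both conjuncts evaluate to $1$ at $S$; once this convention is stated, the argument is a one-line chase through the definitions. I would also note that $e$ is used only as a black-box predicate on rows --- no assumption about its syntactic form, the presence of UDFs, or compositions with AND/OR/NOT is needed --- which is exactly why this lemma can later be invoked freely inside the inductive correctness proof of Algorithm~\ref{alg:mergeclause} (e.g.\ to justify combining the clauses in $CS(v)$ in Cases~1 and~4).
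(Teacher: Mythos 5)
Your proof is correct and follows the same route as the paper's: fix a data subset, take the witnessing row satisfying $e$, apply each hypothesis $c_i \wr e$ separately to conclude $c_i(S)=1$, and combine via the pointwise semantics of conjunction. The extra remark that the conjunction of two Clauses is again a Clause is a harmless explicit statement of a convention the paper leaves implicit.
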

\begin{proof}
Assume the stated assumptions. we will show that  $(c_1 \wedge c_2) \wr  e$ by definition: let $o \in U$ s.t. $\exists r\in o.e(r)=1$. then - since $c_1 \wr e$ we get $c_1(o)=1$, identically we get $c_2(o)=1$, thus $c_1(o)=1 \wedge c_2(o)=1 \implies (c_1\wedge c_2)(o)=1$.
\end{proof}

\begin{lemma}
\label{conj-multi-lemma}
Let $e_1,e_2$ denote a pair of boolean $expressions$, let $c_1,c_2$ s.t. $c_1 \wr e_1 \wedge c_2 \wr e_2$.
Then $(c_1 \wedge c_2) \wr (e_1\wedge e_2)$.
\end{lemma}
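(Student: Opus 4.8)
The plan is to prove Lemma~\ref{conj-multi-lemma} directly from Definition~\ref{representdef}, mirroring the structure of the proof of Lemma~\ref{conj-lemma} but being careful about the fact that the two representation relations now involve two \emph{different} expressions $e_1$ and $e_2$. First I would unfold the definition of $\wr$ for the target statement: I need to show that for every data subset $o \in U$, whenever there exists a row $r \in o$ with $(e_1 \wedge e_2)(r) = 1$, then $(c_1 \wedge c_2)(o) = 1$.

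Fix such an $o$ and such a witnessing row $r$. The key observation is that if $r$ satisfies $e_1 \wedge e_2$, then $r$ satisfies $e_1$ and $r$ satisfies $e_2$ individually (this is just the semantics of conjunction at the row level). So $r$ witnesses that some row of $o$ satisfies $e_1$; by the hypothesis $c_1 \wr e_1$ we conclude $c_1(o) = 1$. Symmetrically, the \emph{same} row $r$ witnesses that some row of $o$ satisfies $e_2$, so by $c_2 \wr e_2$ we get $c_2(o) = 1$. Combining, $(c_1 \wedge c_2)(o) = c_1(o) \wedge c_2(o) = 1$, which is exactly what we needed.

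I do not expect any real obstacle here; the only thing to be slightly careful about is that it is the single common row $r$ that does double duty as a witness for both $e_1$ and $e_2$ — one should not be tempted to think two separate witnessing rows are needed. Once that is noted, the argument is a one-line application of the definition twice, and in fact Lemma~\ref{conj-lemma} is just the special case $e_1 = e_2 = e$ of this lemma (so one could alternatively present Lemma~\ref{conj-multi-lemma} as the general statement and derive Lemma~\ref{conj-lemma} from it). I would keep the proof to essentially the three sentences above.
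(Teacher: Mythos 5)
Your proof is correct and follows exactly the paper's own argument: fix a witnessing row $r$ for $e_1 \wedge e_2$, note that the same $r$ witnesses $e_1$ and $e_2$ separately, apply $c_1 \wr e_1$ and $c_2 \wr e_2$ to get $c_1(o)$ and $c_2(o)$, and conjoin. Your remark that Lemma~\ref{conj-lemma} is the special case $e_1 = e_2$ is a nice observation, though the paper keeps the two lemmas separate.
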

\begin{proof}
Assume the stated assumptions and let $o \in U$ s.t. $\exists r \in o.(e_1 \wedge e_2)(r)$, we will show that $(c_1 \wedge c_2)(o)$: in particular, $e_1(r)$, which implies $c_1(o)$. identically we get $c_2(o)$, thus $c_1(o) \wedge c_2(o) \implies (c_1 \wedge c_2)(o)$.
\end{proof}

\begin{lemma}
\label{disj-multi-lemma}
Let $e_1,e_2$ denote a pair of boolean $expressions$, let $c_1,c_2$ s.t. $c_1 \wr e_1 \wedge c_2 \wr e_2$.
Then $(c_1 \vee c_2) \wr (e_1 \vee e_2)$.
\end{lemma}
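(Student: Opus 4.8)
The plan is to prove the statement directly from Definition \ref{representdef}, following the same template as the proofs of Lemmas \ref{conj-lemma} and \ref{conj-multi-lemma}, but inserting a case analysis at the one point where it is needed. First I would fix an arbitrary object $o \in U$ and assume there is a row $r \in o$ with $(e_1 \vee e_2)(r) = 1$; the goal is then to show $(c_1 \vee c_2)(o) = 1$, after which the conclusion $(c_1 \vee c_2) \wr (e_1 \vee e_2)$ follows since $o$ was arbitrary.

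Next I would unfold the disjunction at the row level: $(e_1 \vee e_2)(r)$ means $e_1(r)$ or $e_2(r)$. In the first case, $e_1(r)$ together with the hypothesis $c_1 \wr e_1$ yields $c_1(o) = 1$, hence $(c_1 \vee c_2)(o) = 1$. In the second case, $e_2(r)$ together with $c_2 \wr e_2$ yields $c_2(o) = 1$, and again $(c_1 \vee c_2)(o) = 1$. Since one of the two cases must occur, $(c_1 \vee c_2)(o) = 1$ in either case, completing the argument.

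The only point worth flagging is conceptual rather than technical: unlike the conjunctive Lemmas \ref{conj-lemma} and \ref{conj-multi-lemma}, here we cannot feed the witnessing row $r$ into \emph{both} hypotheses at once — $r$ is only guaranteed to satisfy one of $e_1, e_2$ — so the case split is unavoidable, and one must resist concluding the (false in general) stronger statement $c_1(o) \wedge c_2(o)$. The correct weaker conclusion $c_1(o) \vee c_2(o)$ is exactly what $(c_1 \vee c_2)(o)$ abbreviates, so the case analysis lands precisely where it is needed. This lemma, together with Lemma \ref{conj-multi-lemma}, is what justifies the $OR$ branch (Case $2$) of Algorithm \ref{alg:mergeclause} in the correctness proof.
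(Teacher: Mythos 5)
Your proof is correct and follows essentially the same route as the paper's: fix an object $o$ with a witnessing row $r$ satisfying $(e_1 \vee e_2)(r)$, case-split on whether $e_1(r)$ or $e_2(r)$ holds, and apply the corresponding hypothesis $c_i \wr e_i$ to conclude $c_i(o)$, hence $(c_1 \vee c_2)(o)$. Your remark that the case split is unavoidable here (unlike in the conjunctive lemmas) accurately reflects the one structural difference from Lemmas~\ref{conj-lemma} and~\ref{conj-multi-lemma}.
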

\begin{proof}
Assume the stated assumptions and let $o \in U$ s.t. $\exists r \in o.(e_1 \vee e_2)(r)$, we will show that $(c_1 \vee c_2)(o)$: in particular, if $e_1(r)$ then $c_1(o)$, else we get $e_2(r)$, which implies $c_2(o)$, thus we get $c_1(o)  \vee c_2(o) \implies (c_1 \vee c_2)(o)$ 
\end{proof}

\begin{remark}
The above-mentioned lemmas can easily be re-stated and re-proved for an arbitrary number of expressions, by a simple induction. we omit these parts and from now we will use the lemmas as if stated for an arbitrary number of expressions. 
\end{remark}

\begin{lemma}
\label{correctness-lemma}
Let $e$ denote a boolean $expression$, denote by $T_e$ the expression tree rooted at $e$.
Assume the following holds:
\begin{assumption} $\forall v \in T_e \forall c\in CS(v):c \wr v$.
\label{repr-assumption}
\end{assumption}
Denote by $C$ the output of Algorithm~\ref{alg:mergeclause} on $e$, then $C \wr e$.
\end{lemma}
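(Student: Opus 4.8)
The plan is to prove \cref{correctness-lemma} by structural induction on the expression tree $T_e$, following exactly the four-way case split in Algorithm~\ref{alg:mergeclause}. The induction hypothesis is that for any proper subtree rooted at an expression $e'$, if \cref{repr-assumption} holds for $T_{e'}$ (which it does, since $T_{e'}$ is a subtree of $T_e$ and the assumption is universally quantified over all vertices), then the algorithm's output $C'$ on $e'$ satisfies $C' \wr e'$. The base case is Case 4 (a leaf, or any node that is not an AND/OR/NOT): here the algorithm returns $\bigwedge_{\gamma \in CS(v)}\gamma$, and since by \cref{repr-assumption} every $\gamma \in CS(v)$ satisfies $\gamma \wr e$, the iterated version of \cref{conj-lemma} gives $\bigwedge_{\gamma \in CS(v)}\gamma \wr e$ directly.

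For the inductive step I would handle the three connective cases in turn. In Case 1, $e = \mathrm{AND}(a,b)$: by the induction hypothesis $\alpha \wr a$ and $\beta \wr b$, so by \cref{conj-multi-lemma} we get $(\alpha \wedge \beta) \wr (a \wedge b) = e$; separately, each $\gamma \in CS(v)$ satisfies $\gamma \wr e$, so $\phi = \bigwedge_{\gamma \in CS(v)}\gamma \wr e$; combining the two via \cref{conj-lemma} yields $(\alpha \wedge \beta \wedge \phi) \wr e$, which is the returned clause. Case 2, $e = \mathrm{OR}(a,b)$, is analogous but uses \cref{disj-multi-lemma} to get $(\alpha \vee \beta) \wr (a \vee b) = e$, then \cref{conj-lemma} to conjoin $\phi$, giving $((\alpha \vee \beta) \wedge \phi) \wr e$. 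Case 3, $e = \mathrm{NOT}(a)$: the induction hypothesis gives $\alpha \wr a$; if $\alpha$ admits a negation $\alpha_a^*$ with respect to $a$, then by definition of negation with respect to $a$ we have $\alpha_a^* \wr \neg a = e$, which is exactly what is returned; if it does not, the algorithm returns None, and the vacuous claim that None $\wr e$ holds (nothing can be skipped, so the representation property is trivially satisfied — or one treats None as the identically-true clause).

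I expect the main subtlety to be not any single case but making the bookkeeping precise: specifically, being careful that \cref{repr-assumption} really does hold for every subtree invoked recursively (it does, because the assumption quantifies over \emph{all} $v \in T_e$, and the vertices of a subtree are a subset), and being explicit about the convention for the None / empty-$CS(v)$ cases so that the statement $C \wr e$ remains meaningful when $C$ is None. Once \cref{correctness-lemma} is in hand, \cref{correctness-thm} follows immediately: Algorithm~\ref{alg:generateclause} first applies the filters $f_1,\dots,f_k$, and by the definition of a filter each one only adds clauses $c$ with $c \wr v$ to nodes $v$, so after all filters run \cref{repr-assumption} holds for $T_e$; then Algorithm~\ref{alg:mergeclause} is invoked, and \cref{correctness-lemma} gives $C \wr e$.
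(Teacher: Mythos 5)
Your proposal is correct and follows essentially the same route as the paper's proof: the paper also inducts over the tree (phrased as full induction on the depth of the $\{\vee,\wedge,\neg\}$ structure rather than structural induction, which is equivalent here), with the identical four-case split, the same use of Lemmas~\ref{conj-lemma}, \ref{conj-multi-lemma} and \ref{disj-multi-lemma}, and the same treatment of the NOT/None case. No substantive differences to report.
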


\newlist{casesp}{enumerate}{3}
\setlist[casesp,1]{label=\textbf{\emph{Case~\arabic*:}},ref=\arabic*}
\setlist[casesp,2]{label=\textbf{\emph{Case~\thecasespi.\alph*:}}}
\setlist[casesp,3]{label=\textbf{\emph{Case~\thecasespi.\thecasespii.\roman*:}}}

\begin{proof}
By full induction on $d$ - the depth \footnote{in this case the depth is defined as the maximum length (in edges) of a path from the root ($T_e$) to a $\{ \vee , \wedge , \neg \}$  node, comprised of $\{ \vee , \wedge , \neg \}$ nodes only, so for example the depth of $(a+b < 2) \wedge (c < 5)$ is $1$}.we will assume WLOG that all $\{ \vee , \wedge , \neg \}$ nodes are of degree $ \le 2$. 
\begin{casesp}
\item \textbf{(Base case, $d=0$)}
In this case, $e$ is a single boolean operator, so case $4$ of Algorithm~\ref{alg:mergeclause} is applied. By our assumption, $\forall c \in CS(e).c \wr e$, by lemma~ \ref{conj-lemma} , we get $\bigwedge_{\gamma \in CS(e)}\gamma \wr e$, and indeed this is the output in this case.

\item \textbf{(Induction Step)}
 Let $d \in \mathbb N^{+}$ and assume the claim holds for all $k \in \{0 ... d-1\}$. since $ d > 0$, cases ${1,2,3}$ of Algorithm~\ref{alg:mergeclause} are the only options.
\begin{casesp}
\item \textbf{\emph{if $e = AND(a,b):$}} 
in this case, Algorithm~\ref{alg:mergeclause} is called again on $a,b$, use $\alpha, \beta$ from Algorithm~\ref{alg:mergeclause}'s notation. $a,b$ are both expressions of depth strictly smaller than $d$, so by the inductive hypothesis we have $\alpha \wr a$ and $\beta \wr b$ ; by lemma~ \ref{conj-multi-lemma} we get $(\alpha \wedge \beta) \wr (a \wedge b)$.
by lemma~ \ref{conj-lemma}  and from Assumption~\ref{repr-assumption} we get  $(\phi = \bigwedge_{\gamma \in CS(e)}\gamma )\wr e$. applying lemma~ \ref{conj-lemma} again we get 
$(\alpha \wedge \beta \wedge \phi) \wr e$, and indeed this is the output in this case.
\item \textbf{\emph{if $e=OR(a,b):$}}
in this case, Algorithm~\ref{alg:mergeclause} is called again on $a,b$, use $\alpha, \beta$ from Algorithm~\ref{alg:mergeclause}'s notation. $a,b$ are both expressions of depth strictly smaller than $d$, so by the inductive hypothesis we have $\alpha \wr a$ and $\beta \wr b$; 
by lemma~\ref{disj-multi-lemma} we get $(\alpha \vee \beta) \wr e$. by lemma~ \ref{conj-lemma}  and from Assumption~\ref{repr-assumption} we get  $(\phi = \bigwedge_{\gamma \in CS(e)}\gamma )\wr e$. applying lemma~ \ref{conj-lemma} again we get 
$((\alpha \vee \beta) \wedge \phi) \wr e$, and indeed this is the output in this case.
\item \textbf{\emph{if $e=NOT(a)$:}}
in this case, Algorithm~\ref{alg:mergeclause} is called again on $a$, and the result is denoted as $\alpha$.
\item \textbf{\emph{if $\alpha$ can be negated with respect to $a$:}} Algorithm~\ref{alg:mergeclause} returns $\alpha_a^*$, and by definition $\alpha_a^* \wr \neg a = e$
\item \textbf{\emph{if $\alpha$ CAN NOT be negated with respect to $a$:}} $None$ is returned, which represents \emph{any} expression.
\end{casesp}
\end{casesp}

\end{proof}

We are now ready to prove Theorem~\ref{correctness-thm}:

\begin{proof}[Proof of Theorem~\ref{correctness-thm}]
From Algorithm~\ref{alg:mergeclause}'s assumptions we know that $f_1 ,...,f_n$ are \emph{filters}, thus Assumption~\ref{repr-assumption} holds.
Thus, correctness follows from Lemma~\ref{correctness-lemma}.
\end{proof}

\section{Indexing Statistics}
\label{sec:indexstats}
\begin{table}[!ht]
\centering
\begin{threeparttable}
\caption{Indexing Statistics}
\begin{tabular}{|c|c|c|c|c|} \hline
{\bf Index}&{\bf Col} &{\bf Num.} & {\bf MD} & {\bf Indexing}\\ 
{\bf Type}&{\bf Size} &{\bf Objects} & {\bf Size} & {\bf Time}\\ 
& (GB) & & (MB) & (min) \\ \hline
ValueList & 6.73  & 4000 & 163.2 & 9.4  \\ \hline
BloomFilter & 6.73  & 4000 & 67.8 & 10.0  \\ \hline
Hybrid & 6.73 & 4000 & 40.4 & 9.7 \\ \hline
Prefix(15) & 6.73 & 4000 & 17.0 & 10.0 \\ \hline
Suffix(15) & 6.73 & 4000 & 35.5 & 10.0 \\ \hline
Value List & 0.39  & 4000 & 34.2 & 8.5 \\ \hline
BloomFilter & 0.39  & 4000 & 38.9 & 9.4 \\ \hline
Hybrid & 0.39  & 4000 & 34.3 & 10.0 \\ \hline
Formatted\tnote{1} & 0.72 & 4000 & 0.27 & 15 \\ \hline
MinMax & 0.56 & 4000 & 0.125 & 0.97 \\ \hline
MinMax & 12.16 & 8192 & 0.38 & 1.2 \\ \hline
\end{tabular}
\label{table:indexstats}
\begin{tablenotes}
\item [1]Identifies malicious requests using user\_agent column - see section \ref{formatskipping}.
\end{tablenotes}
\end{threeparttable}
\end{table}

\section{Example data skipping index} \label{indexexample}
The following is a simplified version of the user\_agent index from section \ref{formatskipping}.
We registered a UDF in Spark which uses the Yauaa library\cite{yauaa} to extract the user agent name from a user\_agent string.
\begin{lstlisting}
import nl.basjes.parse.useragent._
object UserAgentUDF {
  // Define the analyzer at the object level 
  val analyzer = UserAgentAnalyzer.newBuilder().build()
  val getAgentName = (userAgent: String) => {
    val res = analyzer.parse(userAgent)
    .get(UserAgent.AGENT_NAME)
    res.getValue
  }}
\end{lstlisting}
The UDF registration is done in the main code using
\begin{lstlisting}
spark.udf.register("getAgentName", UserAgentUDF.getAgentName)
\end{lstlisting}
The user\_agent index collects a list of distinct agent names, therefore, we reuse the Value List  {\tt MetaDataType} (representing a set of strings) and its {\tt Clause} as well as the translation for both.

\subsection{Index Creation}
\begin{lstlisting}
case class UserAgentNameListIndex(column : String) extends Index(Map.empty, column) {
    def collectMetaData(df: DataFrame): MetadataType = {
    ValueListMetaData(
    	    df.select(getAgentName(col("user_agent")))
      		.distinct().collect().map(_.getString(0)))
}}
\end{lstlisting}

\subsection{Query evaluation}
The following filter identifies the query pattern appearing in section \ref{formatskipping}.
\begin{lstlisting}
case class UserAgentNameFilter(col:String) extends BaseMetadataFilter {
 def labelNode(node:LabelledExpressionTree): Option[Clause] = {
node.expr match {
         case EqualTo(udfAgentName: ScalaUDF,v : Literal) if isUserAgentUDF(udfAgentName, col) =>
        		Some(ValueListClause(col, Array(v.value.toString)))	
          case _ => None
    }}       
\end{lstlisting}
The function {\tt isUserAgentUDF} checks for a match in the ET with the {\tt getAgentName} UDF.\\

\end{document}